\newtheorem{remark}{Remark}
\newtheorem{theorem}{Theorem}
\newtheorem{proposition}[theorem]{Proposition}
  \providecommand\BibTeX{{%
    \normalfont B\kern-0.5em{\scshape i\kern-0.25em b}\kern-0.8em\TeX}}}
\begin{document}
\fancyhead{}

\title{Optimizing Generalized Gini Indices for Fairness in Rankings}

\author{Virginie Do}
\email{virginiedo@fb.com}
\affiliation{
  \institution{Meta AI Research}
  \country{}
}
\affiliation{
  \institution{LAMSADE, Université Paris Dauphine-PSL}
  \country{}
}

\author{Nicolas Usunier}
\email{usunier@fb.com}
\affiliation{
  \institution{Meta AI Research}
  \country{}
}

\renewcommand{\shortauthors}{Do and Usunier}

\begin{abstract}

There is growing interest in designing recommender systems that aim at being fair towards item producers or their least satisfied users. Inspired by the domain of inequality measurement in economics, this paper explores the use of generalized Gini welfare functions (GGFs) as a means to specify the normative criterion that recommender systems should optimize for. GGFs weight individuals depending on their ranks in the population, giving more weight to worse-off individuals to promote equality. Depending on these weights, GGFs minimize the Gini index of item exposure to promote equality between items, or focus on the performance on specific quantiles of least satisfied users.
GGFs for ranking are challenging to optimize because they are non-differentiable. We resolve this challenge by leveraging tools from non-smooth optimization and projection operators used in differentiable sorting.
We present experiments using real datasets with up to $15$k users and items, 
which show that our approach obtains better trade-offs than the baselines on a variety of recommendation tasks and fairness criteria.
\end{abstract}


\begin{CCSXML}
<ccs2012>
   <concept>
       <concept_id>10002951.10003317.10003338</concept_id>
       <concept_desc>Information systems~Retrieval models and ranking</concept_desc>
       <concept_significance>500</concept_significance>
       </concept>
 </ccs2012>
\end{CCSXML}

\ccsdesc[500]{Information systems~Retrieval models and ranking}


\keywords{fairness, ranking, recommender systems, welfare economics}

\maketitle

\section{Introduction} \label{sec:intro}

Recommender systems play an important role in organizing the information available to us, by deciding which content should be exposed to users and how it should be prioritized. These decisions impact both the users and the item producers of the platform. While recommender systems are usually designed to maximize performance metrics of user satisfaction, several audits recently revealed potential performance disparities across users \citep{sweeney2013discrimination,datta2015automated,ekstrand2018all,mehrotra2017auditing}. On the side of item producers, the growing literature on fairness of exposure aims to avoid popularity biases \citep{abdollahpouri2019unfairness} by reducing inequalities in the exposure of different items \citep{singh2018fairness}, or aiming for equal exposure weighted by relevance \citep{diaz2020evaluating,biega2018equity,morik2020controlling}. In most cases, the approaches proposed for user- and item-side fairness aim to reduce inequalities.

In this paper, we propose a new approach to fair ranking based on Generalized Gini welfare Functions (GGFs, \citep{weymark1981generalized}) from the economic literature on inequality measurement \citep{cowell2000measurement}. GGFs are used to make decisions by maximizing a weighted sum of the utilities of individuals which gives more weight to those with lower utilities. By prioritizing the worse-off, GGFs promote more equality. 

The normative appeal of GGFs lies in their ability to address a multiplicity of fairness criteria studied in the fair recommendation literature. Since GGFs include the well-known Gini inequality index as a special case \citep{gini1921measurement}, they can be used to optimize trade-offs between exposure inequality among items and user utility, a goal seeked by many authors \citep{morik2020controlling,zehlike2020reducing}. GGFs also conveniently specify normative criteria based on utility quantiles \citep{do2021two}: for instance, it is possible to improve the utility of the $10\%$ worse-off users and/or items with GGFs, simply by assigning them more weight in the objective. Moreover, using techniques from convex multi-objective optimization, we show that GGFs cover \emph{all} ranking policies that satisfy \emph{Lorenz efficiency}, a distributive justice criterion which was recently introduced for two-sided fairness in rankings \citep{do2021two}.


The difficulty of using GGFs as objective functions for fairness in ranking stems from their non-differentiability, which leads to computational challenges. Indeed, ranking with fairness of exposure requires the solution of a global optimization problem in the space of (randomized) rankings of all users, because the exposure of an item is the sum of its exposure to every users. The Frank-Wolfe algorithm \citep{frank1956algorithm} was shown to be a computationally efficient method for maximizing globally fair ranking objectives, requiring only one top-$K$ sort operation per user at each iteration \citep{do2021two}. However, vanilla Frank-Wolfe algorithms only apply to objective functions that are differentiable, which is not the case of GGFs.


We propose a new algorithm for the optimization of GGFs based on extensions of Frank-Wolfe algorithms for non-smooth optimization \citep{lan2013complexity,yurtsever2018conditional,2020thekumparampil}. These methods usually optimize smoothed surrogate objective functions, while gradually decreasing a smoothing parameter, and a common smoothing technique uses the Moreau envelope \citep{moreau1962fonctions,yosida1965functional}. Our main insight is that the gradient of the Moreau envelope of GGFs can be computed in $O(n \log n)$ operations, where $n$ is the number of users or items. This result unlocks the use of Frank-Wolfe algorithms with GGFs, allowing us to efficiently find optimal ranking policies while optimizing GGFs. 

We showcase the performances of the algorithm on two recommendations tasks of movies and music, and on a reciprocal recommendation problem (akin to dating platforms, where users are recommended to other users), with datasets involving up to $15k$ users and items. Compared to relevant baselines, we show that our algorithm successfully yields better trade-offs in terms of user utility and inequality in item exposure measured by the Gini index. Our approach also successfully finds better trade-offs in terms of two-sided fairness when maximizing the lower quantiles of user utility while minimizing the Gini index of item exposure.

In the remainder of the paper, we first describe our recommendation framework. We then present the family of generalized Gini welfare functions and its relationship to previously proposed fairness criteria in ranking. In Sec.~\ref{sec:moreau-ggf} we provide the details of our algorithm and the convergence guarantees. Our experimental results are reported in Sec.~\ref{sec:xps}, and an extension to reciprocal recommendation problems is discussed in Sec.~\ref{sec:reciprocal}. We position our approach with respect to the related work in Sec.~\ref{sec:related}, and Sec.~\ref{sec:conclu} concludes the paper and discusses the limitations of our work.


\section{Fair ranking with Generalized Gini} \label{sec:framework} 

\subsection{Recommendation framework}\label{sec:framework}

We consider a recommendation scenario with $n$ users, and $m$ items, and $K$ recommendation slots. $\mu_{ij} \in [0,1]$ denotes the value of item $j$ for user $i$ (e.g, a ``liking'' probability), and we assume the values $\mu$ are given as input to the system. 
The goal of the system is to produce a ranked list of items for each of the $n$ users. 
Following previous work on fair rankings \citep[e.g.][]{singh2018fairness}, we consider randomized rankings because they enable the use of convex optimization techniques to generate the recommendations, which would otherwise involve an intractable combinatorial optimization problem in the space of all users' rankings. A \emph{randomized ranking} for user $i$ is represented by a bistochastic matrix $\rrk_i \in \reals^{m\times m},$ where $\rrk_{ijk}$ is the probability that item $j$ is recommended to user $i$ at position $k.$ The recommender system is characterized by a \emph{ranking policy} $\rrk = (\rrk_i)_{i=1}^n$. We denote the convex set of ranking policies by $\rrkS$.

We use the term \emph{utility} in its broad sense in cardinal welfare economics as a ``\emph{measurement of the higher-order characteristic that is relevant to the particular distributive justice problem at hand}'' \citep{moulin2003fair}. Similarly to \cite{patro2020fairrec,wang2021user,do2021two}, we define the utility of a user as the ranking performance, and the utility of an item as its average exposure to users, which are formalized in \eqref{eq:utilities} below. Utilities are defined according to the position-based model \citep{biega2018equity,morik2020controlling,do2021two} with weights $\ew \in \reals_+^m$. The weight $\ew_k$ is the probability that a user examines the item at position $k$, and we assume that the weights are non-increasing. Since there are $K$ recommendation slots, we have $\ew_1 \geq \ldots \geq \ew_K$ and $\ew_k = 0$ for any $k > K$. The user and item utilities are then:
\begin{align}\label{eq:utilities}
    \hspace{-1.5em}\resizebox{0.92\linewidth}{!}{$\displaystyle\textit{User utility: }\, u_i(P) = \sum_{j=1}^m \muij \rrk_{ij}^\top \ew \quad\quad\textit{Item exposure: }\, v_j(P) = \sum_{i=1}^n\rrk_{ij}^\top \ew.$}
\end{align}
We follow a general framework where the ranking policy $\rrk$ is found by maximizing a global \emph{welfare function} $F(\rrk)$, and the welfare function is a weighted sum of welfare functions for users and items:
\begin{align}\label{eq:welfobj}
    \welfobj(\rrk) = (1-\lambda) \welfobjuser(\bu(\rrk)) + \lambda \welfobjitem(\bv(\rrk)),
\end{align}
where $\welfobjuser:\Re^n\rightarrow\Re$ and $\welfobjitem:\Re^m\rightarrow\Re$ respectively aggregate the utilities of users and item exposures and $\lambda\in[0,1]$ specifies the relative weight of users and items. 

\subsection{Generalized Gini welfare functions} \label{sec:GGF-pres}

In this work, we focus on the case where $\welfobjitem$ and $\welfobjuser$ are based on Generalized Gini welfare Functions (GGFs) \citep{weymark1981generalized}). A GGF $\giniwelf{\bw}:\Re^n\rightarrow \Re$ is a function parameterized by a vector $\bw \in \reals^n$ of non-increasing positive weights such that $\bw_1=1\geq\ldots\geq \bw_n\geq0$, and defined by a weighted sum of its sorted inputs, which is also called an ordered weighted averaging operator (OWA) \citep{yager1988ordered}. Formally, let $\bx\in\Re^n$ be a utility vector and denote by $\bx\sort$ the values of $\bx$ sorted in increasing order, i.e., $\bx_1\sort \leq ...\leq \bx_n\sort$. Then:
\begin{align}\label{eq:def:ggf}
    \giniwelf{\bw}(\bx) 
    = \sum_{i=1}^{n}\bw_i \bx_i\sort.
\end{align}

Let $\wS_n = \{\bw \in \reals^n:  \bw_1=1 \geq \ldots \geq \bw_n \geq 0\}$ be the set of admissible weights of GGFs. Given $\ubw\in\wS_n$, $\ibw\in\wS_m$ and $\lambda \in(0,1)$, we define the \emph{two-sided GGF} as the welfare function \eqref{eq:welfobj} with $\welfobjuser = \giniwelf{\ubw}$ and $\welfobjitem=\giniwelf{\ibw}$:
\begin{align}\label{eq:def-2sidedggf}
    \welfobj_{\lambda, \ubw, \ibw}(\rrk) = (1-\lambda)\giniwelf{\ubw}\big(\bu(\rrk)\big) + \lambda \giniwelf{\ibw}\big((\bv(\rrk)\big).
\end{align}

With non-increasing, non-negative weights $\bw$, OWA operators are concave \citep{yager1988ordered}. 
The maximization of $\welfobj_{\lambda, \ubw, \ibw}(\rrk)$ \eqref{eq:def-2sidedggf} is thus a convex optimization problem (maximization of a concave function over the convex set of ranking policies). GGFs address fairness from the point of view of distributive justice in welfare economics \citep{moulin2003fair}, because they assign more weight to the portions of the population that have the least utility. Compared to a standard average, a GGF thus promotes more equality between individuals.

\paragraph{Relationship to the Gini index}
GGFs are welfare functions so they follow the convention that they should be maximized. Moreover, if $\bw_i>0$ for all $i$,  $\giniwelf{\bw}$ is increasing with respect to every individual utilities, which ensures that maximizers of GGFs are Pareto-optimal \citep{moulin2003fair}. The Gini index of $\bx$, denoted $\gini(\bx)$ is associated to the GGF $g_{\bw}(\bx)$ with $\bw_i=\nicefrac{(n-i+1)}{n}$ \citep[for formulas of Gini index, see][]{yitzhaki2013more}:
\begin{align}\label{eq:def:giniindex}
    \gini(\bx)
    &= 1-\frac{2}{\norm{\bx}_1}\sum_{i=1}^n \frac{n-i+1}{n} \bx_i\sort\\
    &= \frac{1}{n^2\bxm}\sum_{i=1}^n\sum_{j=1}^n |\bx_i-\bx_j|
    && \text{with~} \bxm=\frac{1}{n}\sum_{i=1}^n \bx_i.
\end{align}
The second equality gives a more intuitive formula as a normalized average of absolute pairwise differences. The Gini index is an inequality measure, and therefore should be minimized, but, more importantly, it is normalized by the sum of utilities $\norm{\bx}_1$, which means that in general minimizing the Gini index does not yield Pareto-optimal solutions. The importance of this normalization is discussed by e.g., \citet{atkinson1970measurement}, and by \citep{do2021two} in the context of fairness in rankings. Yet, when $\bx$ is a vector of item exposures $\bx=\bv(\rrk)$, the normalization is not important because the total exposure is constant. It is then equivalent to minimize the Gini index of item exposures or to maximize its associated GGF.

\paragraph{Multi-objective optimization of Lorenz curves} An alternative formula for $\giniwelf{\bw}(\bx)$ is based on the generalized Lorenz curve\footnote{Lorenz curves are normalized so that the last value is $1$, while generalized Lorenz curves are not normalized.} \citep{shorrocks1983ranking} of $\bx$, which is denoted $\bxcum$ and is defined as the vector of cumulative sums of sorted utilities:
\begin{align}
    \label{eq:def:ggfaslorenz}
    \resizebox{0.92\linewidth}{!}{$\displaystyle \giniwelf{\bw}(\bx) 
    = \sum_{i=1}^n \bw_i' \bxcum_i
    \text{~where~} \bw_i'=\bw_i-\bw_{i+1} \text{~and~} \bxcum_i=\bx_1\sort+\ldots+\bx_i\sort.$}
\end{align}
We used the convention $\bw_{n+1}=0$. Notice that since the weights $\bw$ are non-increasing,  we have that $\bw_{i}'\geq 0$. Thus, family of admissible OWA weights $\bw$ yield weights $\bw'$ that are non-negative and sum to $1$. This formula offers the interpretation of GGFs as positively weighted averages of points of the generalized Lorenz curves. Every GGF thus corresponds to a scalarization of the multi-objective problem of maximizing every point of the generalized Lorenz curve \citep{geoffrion1968proper,miettinen2012nonlinear}. We get back to this interpretation in the next subsections.

\subsection{GGFs for fairness in rankings} \label{sec:fairnesstasks}

To give concrete examples of the relevance of GGFs for fairness in rankings, we provide here two fairness evaluation protocols that have been previously proposed and fall under the scope of maximizing of GGFs as in Eq. \eqref{eq:def-2sidedggf}.

\paragraph{Trade-offs between user utility and inequality in item exposure} The first task consists in mitigating inequalities of exposure between (groups of) items, and appears in many studies \citep{singh2018fairness,zehlike2020reducing,wu2021tfrom}. This leads to a trade-off between the total utility of users and inequality among items, and such inequalities are usually measured by the Gini index (as in \citep{morik2020controlling,biega2018equity}). Removing the dependency on $\rrk$ to lighten the notation, a natural formulation of this trade-off uses the two-sided GGF \eqref{eq:def-2sidedggf} by setting $\ubw = (1,\ldots,1)$ and $\ibw = \left(\frac{m-j+1}{m}\right)_{j=1}^m$, which yields:
\begin{align}\label{eq:def:eqexposure}
    \displaystyle\welfobjuser(\bu)=\frac{1}{n}\sum_{i=1}^n\bu_i
    \quad\quad
    \welfobjitem(\bv)
    =\sum_{j=1}^m \frac{m-j+1}{m} \bv_j\sort.
\end{align}
As stated in the previous section, for item exposure, maximizing $\welfobjitem$ is equivalent to minimizing the Gini index. The Gini index for $\welfobjitem$ has been routinely used for \emph{evaluating} inequality in item exposure \citep{morik2020controlling,biega2018equity} but there is no algorithm to optimize general trade-offs between user utility and the Gini index of exposure. 
\citet{morik2020controlling} use the Gini index of exposures in the context of dynamic ranking (with the absolute pairwise differences formula \eqref{eq:def:giniindex}), where their algorithm is shown to asymptotically drive $\welfobjitem(\bv)$ to $0$, equivalent to $\lambda\to 1$ in \eqref{eq:welfobj}. However, their algorithm cannot be used to converge to the optimal rankings for other values of $\lambda$. \citet{do2021two} use as baseline a variant using the standard deviation of exposures instead of absolute pairwise difference because it is easier to optimize (it is smooth except on $0$). In  contrast, our approach allows for the direct optimization of the welfare function \eqref{eq:welfobj} with this instantiation of $\welfobjitem$ given by eq. \eqref{eq:def:eqexposure}.

Several authors \citep{morik2020controlling,biega2018equity} used \emph{merit-weighted} exposure\footnote{also called \emph{``equity of attention''} \citep{biega2018equity}, \emph{``disparate treatment''} \citep{singh2018fairness}} $\bv'_j(\rrk) = \bv(\rrk)/\overline{\mu}_j$ where $\overline{\mu}_j=\frac{1}{n}\sum_{i=1}^n\muij$ is the average value of item $j$ across users, rather than the exposure itself. We keep the non-weighted exposure to simplify the exposition, but our method straightforwardly applies to merit-weighted exposure. Note however that the sum of weighted exposures is not constant, so using \eqref{eq:def:eqexposure} with merit-weighted exposures is not strictly equivalent to minimizing the Gini index.

\paragraph{Two-sided fairness} \citet{do2021two} propose to add a user-side fairness criterion to the trade-off above, to ensure that worse-off users do not bear the cost of reducing exposure inequalities among items. Their evaluation involves multi-dimensional trade-offs between specific points of the generalized Lorenz curve. Using the formulation \eqref{eq:def:ggfaslorenz} of GGFs, trade-offs between maximizing the cumulative utility at a specific quantile $q$ of users and total utility can be formulated using a parameter $\omega\in[0,1]$ as follows:
\begin{align}\label{eq:quantile}
    \welfobjuser(\bu) = \sum_{i=1}^n \bw_i' \bucum_i \quad \text{with~} \bw_{\lfloor qn\rfloor}'=\omega \text{~and~} \bw_n'=1-\omega,
\end{align}
where all other values of $\bw_i'=0$. In our experiments, we combine this $\welfobjuser$ with the Gini index for $\welfobjitem$ for two-sided fairness.



\subsection{Generating all Lorenz efficient solutions}

In welfare economics, the fundamental property of concave welfare functions is that they are monotonic with respect to the dominance of generalized Lorenz curves \citep{atkinson1970measurement,shorrocks1983ranking,moulin2003fair}, because this garantees that maximizing a welfare function performs an optimal redistribution from the better-off to the worse-off at every level of average utility. In the context of two-sided fairness in rankings, \citet{do2021two} formalize their fairness criterion by stating that a ranking policy is fair as long as the generalized Lorenz curves of users and items are not jointly dominated. In this section, we show that the family of GGFs $\welfobj_{\lambda, \ubw, \ibw}(\rrk)$ \eqref{eq:def-2sidedggf} allows to generate \emph{every} ranking policy that are fair under this definition, and \emph{only} those. The result follows from standard results of convex multi-objective optimization \citep{geoffrion1968proper,miettinen2012nonlinear}. We give here the formal statements for exhaustivity.

Let $\bx$ and $\bx'$ two vectors in $\Re_+^n$. We say that $\bx$ weakly-Lorenz-dominates $\bx'$, denoted $\bx\lorenzw\bx'$, when the generalized Lorenz curve of $\bx$ is always at least equal to that of $\bx'$, i.e., $\bx\lorenzw\bx' \iff \forall i, \bxcum_i \geq \bxcum_i'$. We say that $\bx$ Lorenz-dominates $\bx'$, denoted $\bx\lorenz\bx'$ if $\bx\lorenzw\bx'$ and $\bx\neq\bx'$, i.e., if the generalized Lorenz curve of $\bx$ is strictly larger than that of $\bx'$ on at least one point. The criterion that generalized Lorenz curves of users and items are not jointly-dominated is captured by the notion of \emph{Lorenz-efficiency}:
\begin{definition}[\citet{do2021two}]
A ranking policy $\rrk \in \rrkS$ is \emph{Lorenz-efficient} if there is no $\rrk' \in \rrkS$ such that either [$\bu(\rrk') \lorenzw\bu(\rrk)$ and $\bv(\rrk') \lorenz\bv(\rrk)$] or [$\bv(\rrk') \lorenzw\bv(\rrk)$ and $\bu(\rrk') \lorenz\bu(\rrk)$].
\end{definition}


We now present the main result of this section:
\begin{proposition}\label{lem:all-lorenz} Let $\Theta=(0,1)\times\wS_n\times\wS_m$.
    \begin{enumerate}
        \item Let $(\lambda, \ubw, \ibw)\in\Theta$, where $\ubw$ and $\ibw$ have strictly decreasing weights, and $\rrk^* \in \argmax_{\rrk\in\rrkS}\welfobj_{\lambda, \ubw, \ibw}(\rrk)$. 
        
        Then $\rrk^*$ is Lorenz-efficient. 
        \item If $\rrk$ is Lorenz efficient, then there exists $(\iw,\ubw,\ibw) \in \Theta$ such that $\rrk\in\argmax_{\rrk \in \rrkS} \welfobj_{\lambda, \ubw, \ibw}(\rrk)$.
    \end{enumerate}
\end{proposition}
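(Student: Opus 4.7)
The plan is to use the Lorenz-curve formula (Eq.~\eqref{eq:def:ggfaslorenz}) to rewrite the two-sided GGF as
\begin{align*}
\welfobj_{\lambda,\ubw,\ibw}(\rrk) = (1-\lambda)\sum_{i=1}^n \ubw'_i\,\bucum_i(\rrk) + \lambda\sum_{j=1}^m \ibw'_j\,\bvcum_j(\rrk),
\end{align*}
with $\bw'_k = \bw_k - \bw_{k+1}$ (using the convention $\bw_{n+1}=0$), so each $\bw'_k$ is non-negative and the $\bw'_k$ sum to $1$. Each map $\rrk\mapsto\bucum_i(\rrk)$ is concave as the minimum over index subsets of the linear forms $\sum_{l\in S}u_l(\rrk)$ with $|S|=i$; likewise for $\bvcum_j$. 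Maximizing $\welfobj$ over the convex polytope $\rrkS$ is therefore a convex scalarization of the multi-objective problem whose objectives are $\{\bucum_i\}_i\cup\{\bvcum_j\}_j$, and both parts of the proposition reduce to standard scalarization arguments.

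\textbf{Part (1).} Suppose toward a contradiction that $\rrk^*\in\argmax_{\rrk\in\rrkS}\welfobj_{\lambda,\ubw,\ibw}(\rrk)$ fails to be Lorenz-efficient; without loss of generality pick $\rrk'$ with $\bu(\rrk')\lorenzw\bu(\rrk^*)$ and $\bv(\rrk')\lorenz\bv(\rrk^*)$, so $\bucum_i(\rrk')\geq\bucum_i(\rrk^*)$ for every $i$ and $\bvcum_j(\rrk')\geq\bvcum_j(\rrk^*)$ for every $j$, with strict inequality at some $j^*$. The total item exposure $\bvcum_m(\rrk)=n\sum_k\ew_k$ is constant on $\rrkS$, so $j^*<m$, and strict decrease of $\ibw$ forces $\ibw'_{j^*}>0$. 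Hence the item part of $\welfobj$ strictly increases while the user part weakly increases; combined with $\lambda\in(0,1)$ this yields $\welfobj(\rrk')>\welfobj(\rrk^*)$, a contradiction. The symmetric case (users strictly dominated, items weakly) is identical, the only subtlety being that $\bucum_n$ is not constant, so one needs $\ubw'_i>0$ for all $i$---read into the assumption of strictly decreasing $\ubw$ by taking $\ubw_n>0$.

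\textbf{Part (2).} For the converse, let $\rrk$ be Lorenz-efficient and introduce the convex downward closure
\begin{align*}
S = \{(\bx,\by)\in\reals^n\times\reals^m :\ \exists\,\rrk'\in\rrkS,\ \bx\leq\bucum(\rrk'),\ \by\leq\bvcum(\rrk')\}.
\end{align*}
Convexity of $S$ follows from the concavity of each $\bucum_i,\bvcum_j$: given witnesses $\rrk'_1,\rrk'_2$ for two points of $S$, their convex combination provides a witness for the corresponding convex combination. Lorenz-efficiency of $\rrk$ is equivalent to $(\bucum(\rrk),\bvcum(\rrk))$ being a Pareto-maximal boundary point of $S$, and the supporting-hyperplane / weighted-sum scalarization theorem for convex multi-objective optimization~\citep{geoffrion1968proper,miettinen2012nonlinear} yields non-negative coefficients $(\alpha,\beta)\in\reals_{\geq 0}^n\times\reals_{\geq 0}^m$, not all zero, such that $\rrk$ maximizes $\sum_i\alpha_i\bucum_i+\sum_j\beta_j\bvcum_j$ over $\rrkS$. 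Normalizing $\sum_i\alpha_i+\sum_j\beta_j=1$ and setting $\lambda=\sum_j\beta_j$, $\ubw'_i=\alpha_i/(1-\lambda)$, $\ibw'_j=\beta_j/\lambda$, the reverse cumulative sums $\bw_k=\sum_{k'\geq k}\bw'_{k'}$ recover admissible $\ubw\in\wS_n$, $\ibw\in\wS_m$.

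\textbf{Main obstacle.} The delicate step is ensuring $\lambda\in(0,1)$ \emph{strictly}, i.e., arranging both $\sum_i\alpha_i>0$ and $\sum_j\beta_j>0$. If the initial supporting hyperplane concentrates on one side (say $\sum_i\alpha_i=0$), then $\rrk$ maximizes only the pure item sum $\sum_j\beta_j\bvcum_j$ on $\rrkS$; within the compact convex set of item-maximizers, Lorenz-efficiency of $\rrk$ forces it to remain Pareto-maximal in user-space, so a second application of scalarization produces non-zero user weights $\alpha'$. Combining the two weight vectors via a lexicographic / small-perturbation refinement---standard in multi-objective scalarization, see~\citep{miettinen2012nonlinear}---yields a single weighted-sum formulation with both group sums strictly positive, completing the argument.
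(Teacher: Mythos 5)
Your proof follows the same route as the paper's: rewrite the two-sided GGF via \eqref{eq:def:ggfaslorenz} as a non-negative scalarization of the $(n+m)$-objective problem of maximizing the points of the joint generalized Lorenz curves $(\bucum(\rrk),\bvcum(\rrk))$, observe that each coordinate is concave in $\rrk$, and invoke the standard weighted-sum characterization of Pareto efficiency for concave multi-objective programs over a convex set. You are in fact more explicit than the paper about the two boundary issues (the possibly-zero weight on the last Lorenz coordinate in part (1), and forcing $\lambda\in(0,1)$ in part (2)); the only caveat is that the ``small-perturbation / lexicographic'' combination you invoke at the end is not valid for general convex multi-objective problems (only \emph{properly} efficient points admit strictly positive weights), so that step tacitly relies on the objectives here being piecewise-linear over a polytope.
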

\begin{proof}
The proof uses standard results on convex multi-objective optimization from \citep{geoffrion1968proper,miettinen2012nonlinear}. Written in the form \eqref{eq:def:ggfaslorenz}, the GGFs corresponds to the scalarization of the multi-objective problem of jointly maximizing the generalized Lorenz curves of users and items, which is a problem with $n+m$ objectives. Indeed, each objective function is a point of the generalized Lorenz curve $(\bucum(P),\bvcum(P)).$ Each objective $U_i(P)$ is concave because it corresponds to an OWA operator with non-increasing weights $\xvector{\rho}$ with $\rho_{i'} = \indic{i'\leq i},$ applied to utilities, which are linear functions of the ranking policy. Each objective $V_i(P)$ is similarly concave. Moreover, we are optimizing over the convex set of stochastic ranking policies $\rrkS$. The multi-objective problem is then concave, which means that the maximizers of all weighted sums of the objectives $(\bucum(P),\bvcum(P))$ with strictly positive weights are Pareto-efficient. Reciprocally every Pareto-efficient solution is a solution of a non-negative weighted sum of the objectives $(\bucum(P),\bvcum(P)),$ where the weights sum to $1$ \citep{miettinen2012nonlinear}.

The result follows from the observation that the Lorenz-efficiency of $\rrk$, defined as the Lorenz-efficiency of $(\bu(P),\bv(P)),$ 
is equivalent to the Pareto-efficiency of its joint user-item Lorenz curves $(\bucum(P),\bvcum(P)).$
This is because the Lorenz dominance relation between vectors $\bx,\bx'$ is defined as Pareto dominance in the space of their generalized Lorenz curves $\xvector{X},\xvector{X}'.$

\end{proof}

\paragraph{Additive welfare functions vs GGFs}
\citet{do2021two} use additive concave welfare functions to generate Lorenz-efficient rankings. Let $\phi(x, \alpha)=x^\alpha$ if $\alpha>0$, $\phi(x, \alpha)=\log(x)$ if $\alpha=0$ and $\phi(x, \alpha)=-x^\alpha$ if $\alpha<0$. \citet{do2021two} use concave welfare functions of the form:
\begin{align}\label{eq:def-qua}
    \welfobjuser(\bu) = \sum_{i=1}^n \phi(\bu_i, \alpha_1) 
    &&
    \welfobjitem(\bv) = \sum_{j=1}^m \phi(\bv_j, \alpha_2)
\end{align}
Where $\alpha_1$ (resp. $\alpha_2$) specifies how much the rankings should redistribute utility to worse-off users (resp. least exposed items). 

Additive separability plays an important role in the literature on inequality measures \citep{dalton1920measurement,atkinson1970measurement,cowell1988inequality}, as well as in the study of welfare functions because additive separability follows from a standard axiomatization \citep{moulin2003fair}. However, this leads to a restricted class of functions, so that varying $\alpha_1, \alpha_2$ and $\lambda$ in \eqref{eq:def-qua} cannot generate all Lorenz-efficient solutions in general. The GGF approach provides a more general device to navigate the set of Lorenz-efficient solutions, with interpretable parameters since they are weights assigned to points of the generalized Lorenz curve.

\section{Optimizing Generalized Gini Welfare} \label{sec:moreau-ggf}

In this section, we provide a scalable method for optimizing two-sided GGFs welfare functions \eqref{eq:def-2sidedggf} $\ggfobj$. The challenge of optimizing GGFs is that they are nondifferentiable since they require sorting utilities. We first describe why existing approaches to optimize GGFs are not suited to ranking in Sec.~\ref{sec:algo-challenge}. We then show how to efficiently compute the gradient of the Moreau envelope of GGFs in Sec.~\ref{sec:algo-gradient} and present the full algorithm in Sec.~\ref{sec:algo-algo}.

\subsection{Challenges} \label{sec:algo-challenge}

In multi-objective optimization, a standard approach to optimizing OWAs is to solve the equivalent linear program derived by \citet{ogryczak2003solving}. Because the utilities depend on 3d-tensors $\rrk \in \rrkS$ in our case, the linear program has $O(n \cdot m^2)$ variables and constraints, which is prohibitively large in practice. 
Another approach consists in using online subgradient descent to optimize GGFs, like \citep{busa2017multi,mehrotra2020bandit}. 
This is not tractable in our case because it requires to project iterates onto the parameter space, which in our case involves costly projections onto the space of ranking policies $\rrkS.$ 
On the other hand, the Frank-Wolfe algorithm \citep{frank1956algorithm} was shown to provide a computationally efficient and provably convergent method to optimize over $\rrkS$ \citep{do2021two}. However, it only applies to smooth functions, and  
Frank-Wolfe with subgradients may not converge to an optimal solution \citep{nesterov2018complexity}. 

We turn to Frank-Wolfe variants for nonsmooth objectives, since Frank-Wolfe methods are well-suited to our structured ranking problem \citep{do2021two,jaggi2013revisiting,clarkson2010coresets}. More precisely, following \citep{lan2013complexity,yurtsever2018conditional,2020thekumparampil}, our algorithm uses the Moreau envelope of GGFs for smoothing. The usefulness of this smooth approximation depends on its gradient, which computation is in some cases intractable \citep{chen2012smoothing}. Our main technical contribution is to show that the gradient of the Moreau envelope of GGFs can be computed in $O(n \log n)$ operations.

\subsection{The Moreau envelope of GGFs}\label{sec:algo-gradient}

In the sequel, 
$\norm{\bz}$ 
denotes the $\ell_2$ norm. Moreover,
%
a function $L: \mathcal{X}\subseteq \reals^n \rightarrow \reals$ is $C$-smooth if it is differentiable with $C$-Lipschitz continuous gradients, i.e., if $\forall \bx,\bx' \in \mathcal{X}\,, \norm{\nabla L(\bx) - \nabla L(\bx')} \leq C \norm{\bx - \bx'}.$

\subsubsection{Definition and properties} Let us fix weights $\bw \in \wS$ and focus on maximizing the GGF $g_{\bw}.$ Let $h(\bz) := -g_{\bw}(\bz)$ 
to obtain a convex function (this simplifies the overall discussion). 
The function $h$ is $\|\bw\|$-Lipschitz continuous, but non-smooth. 
%
%
We consider the smooth approximation of $h$ given by its Moreau envelope \citep{parikh2014proximal} 
defined as:
\begin{align}
    h^\beta(\bz) = \min_{\bz' \in \Re^n} h(\bz') + \frac{1}{2 \beta} \norm{\bz - \bz'}^2.
\end{align}
It is known that $h^\beta(\bz) \leq h(\bz) \leq h^\beta(\bz) +\frac{\beta}{2}\norm{\bw}^2$ and that $h^\beta$ is $\frac{1}{\beta}$-smooth  \citep[see e.g.,][]{2020thekumparampil}. The parameter $\beta$ thus controls the trade-off between the smoothness and the quality of the approximation of $h$.

\subsubsection{Efficient computation of the gradient} \label{sec:isotonic}

\begin{algorithm}
    \caption{\label{alg:grad-moreau} Computation of $\proj_{\ph(\protect\tbw)}$}
    \DontPrintSemicolon
     \SetKwInOut{Input}{input}
     \SetKwInOut{Output}{output}
     
     \Input{GGF weights $\bw\in\Re^n$, $\bz\in\Re^n$}
     \Output{Projection of $\bz$ onto the permutahedron $\ph(\tbw)$.}
     $\tbw \gets -(\bw_n,...,\bw_1)$ and  $\sigma \gets \argsort(\bz)~$\;
     $\bx \gets \PAV(\bz_{\sigma} - \tbw)~$\;
     $\by\gets \bz + \xvector{x_{\sigma^{-1}}}$\;
     Return $\by$.
\end{algorithm}

We now present  an efficient procedure to compute the gradient of $\fbeta(\rrk) := \hbeta(\bu(\rrk))$.

Given an integer $n \in \mathbb{N},$ let $\intint{n} := \{1,\ldots,n\}\,$ and let $\permSn$ denotes the set of permutations of $\intint{n}$. For $\bx \in \reals^n$, and $\sigma\in\permSn$, let us denote by $\bxsig=(\bx_{\sigma(1)}, ..., \bx_{\sigma(n)})$. Furthermore, let $\ph(\bx)$ denote the \emph{permutahedron} induced by $\bx$, defined as the convex hull of all permutations of the vector $\bx$: $\,\ph(\bx) = \conv\{\bxsig : \sigma \in \permSn\}.$ Finally, let $\proj_{\mathcal{X}}(\bz):= \argmin\limits_{\bz' \in \mathcal{X}}\norm{\bz - \bz'}^2.$ denote the projection onto a compact convex $\mathcal{X}$. 
The following proposition formulates $\nabla \fbeta$ as a projection onto a permutahedron:
\begin{proposition}\label{prop:grad-moreau}

Let $\tbw = -(w_n,...,w_1)$. Let $\rrk \in \rrkS$.
Then for all $(i,j,k) \in \intint{n} \times \intint{m}^2$, we have:
\begin{align}
\label{eq:grad-moreau} 
    \frac{\partial \fbeta}{\partial P_{ijk}}(P) = \projbetaPi \muij \ew_k && \text{where}\quad  \projbetaP = \proj_{\ph(\tbw)}\left(\frac{\bu(P)}{\beta}\right).
\end{align}
\end{proposition}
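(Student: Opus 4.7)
The plan is to identify $h$ with the support function of the permutahedron $\ph(\tbw)$, from which the Moreau envelope and its gradient fall out via a standard min--max exchange and Danskin's theorem.

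First, I would rewrite $h(\bz) = -g_{\bw}(\bz)$ as a support function. By the rearrangement inequality, since the weights $\bw$ are non-increasing, $g_{\bw}(\bz) = \sum_{i=1}^n w_i z_i\sort = \min_{\pi \in \permSn}\langle \bw_\pi, \bz\rangle$, hence $h(\bz) = \max_{\pi \in \permSn} \langle -\bw_\pi, \bz\rangle$. Because $\{-\bw_\pi : \pi \in \permSn\}$ coincides with $\{\tbw_\pi : \pi \in \permSn\}$, and a linear functional attains its maximum over a polytope at a vertex, we obtain
\begin{equation*}
    h(\bz) = \max_{\by \in \ph(\tbw)} \langle \by, \bz\rangle.
\end{equation*}

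Second, I would compute $h^\beta$ by exchanging min and max. Since $\ph(\tbw)$ is compact, the inner function is linear in $\by$ and strongly convex in $\bz'$, Sion's minimax theorem applies, giving
\begin{equation*}
    h^\beta(\bz) = \max_{\by \in \ph(\tbw)} \min_{\bz' \in \Re^n}\bigl[\langle \by, \bz'\rangle + \tfrac{1}{2\beta}\norm{\bz-\bz'}^2\bigr].
\end{equation*}
The inner unconstrained minimization is solved at $\bz' = \bz - \beta \by$, and substituting yields $h^\beta(\bz) = \max_{\by \in \ph(\tbw)} \bigl[\langle \by, \bz\rangle - \tfrac{\beta}{2}\norm{\by}^2\bigr]$. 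Completing the square gives $h^\beta(\bz) = \tfrac{1}{2\beta}\norm{\bz}^2 - \tfrac{\beta}{2}\min_{\by \in \ph(\tbw)} \norm{\by - \bz/\beta}^2$, whose unique maximizer in $\by$ is exactly $\by^*(\bz) = \proj_{\ph(\tbw)}(\bz/\beta)$.

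Third, by Danskin's theorem applied to the concave maximization over the compact set $\ph(\tbw)$ (the objective is linear, hence differentiable, in $\bz$ and the maximizer is unique by strong concavity in $\by$), we conclude $\nabla h^\beta(\bz) = \by^*(\bz) = \proj_{\ph(\tbw)}(\bz/\beta)$. Finally, since $\fbeta(\rrk) = \hbeta(\bu(\rrk))$ and each $u_i(\rrk) = \sum_{j=1}^m \mu_{ij} \rrk_{ij}^\top \ew$ depends only on $\rrk_i$, the chain rule gives $\partial u_{i'}/\partial P_{ijk} = \mu_{ij}\ew_k$ if $i'=i$ and $0$ otherwise, so $\partial \fbeta/\partial P_{ijk}(P) = y_i \,\muij\, \ew_k$ with $\by = \proj_{\ph(\tbw)}(\bu(P)/\beta)$, as claimed.

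The only nontrivial step is the identification $h = \sigma_{\ph(\tbw)}$; after that, the computation is entirely standard convex analysis. Once the gradient formula is established, the algorithmic claim (the $O(n \log n)$ cost via Algorithm~\ref{alg:grad-moreau}) follows from the known reduction of projection onto a permutahedron to isotonic regression, solvable by PAV.
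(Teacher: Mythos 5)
Your proof is correct, and it rests on the same key step as the paper's: identifying $h$ with the support function of the permutahedron $\ph(\tbw)$ via the rearrangement inequality. Where you diverge is in how you pass from that identification to the gradient formula. The paper invokes the standard convex-analysis toolkit directly: the Fenchel conjugate of a support function is the indicator of the underlying set, so its proximal operator is the projection $\proj_{\ph(\tbw)}$; the Moreau decomposition then yields $\prox_{\beta h}(\bz) = \bz - \beta\,\proj_{\ph(\tbw)}(\bz/\beta)$, which is substituted into the known identity $\nabla \hbeta(\bz) = \tfrac{1}{\beta}\big(\bz - \prox_{\beta h}(\bz)\big)$ from \citet{parikh2014proximal}. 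You instead re-derive these facts from scratch: a Sion min--max exchange turns the envelope into a strongly concave maximization over $\ph(\tbw)$ whose unique maximizer is the projection, and Danskin's theorem identifies that maximizer with $\nabla\hbeta(\bz)$. The two routes are mathematically equivalent --- your max--min computation is essentially a proof of the Moreau decomposition specialized to support functions --- but yours is more self-contained, at the cost of a few extra lines, while the paper's is shorter because it cites the conjugate/prox identities. The final chain-rule step through the linear map $\bu$ is handled identically in both.
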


\begin{proof} Let $\prox_{\beta h}(\bz) = \argmin\limits_{\bz' \in \Re^n} \beta h(\bz) + \frac{1}{2}\|\bz'-\bz\|^2$ denote the proximal operator of $\beta h$. Denoting by $\bu*$ the adjoint of $\bu$, it is known that $\grad \fbeta(P) = \frac{1}{\beta} \bu^* (\bu(P) - \prox_{\beta h}(\bu(P))$  \citep{parikh2014proximal}.

We first notice that since $\bw$ are non-increasing, the rearrangement inequalities \citep{hardy1952inequalities} gives:
$h(\bz) = -\min\limits_{\sigma \in \permSn} \bwsigt \bz = \max\limits_{\sigma \in \permSn } -\bwsigt \bz.$
Thus, $h$ is the support function of the convex set $\ph(\tbw)$, since:
\begin{align*}
    h(\bz) = \max_{\sigma \in \permSn} -\bwsigt \bz = \sup_{\by \in \ph(\tbw)} \by^\intercal \bz.
\end{align*}
Then the Fenchel conjugate of $h$ is the indicator function of $\ph(\tbw)$, and its proximal is the projection $\proj_{\ph(\tbw)}$ \citep{parikh2014proximal}. By Moreau decomposition, we get $~\prox(\bz) = \bz - \beta \proj_{\ph(\tbw)}\left(\nicefrac{\bz}{\beta}\right),$ and thus:
\begin{equation}
\begin{aligned}\label{eq:yt-proj}
    \grad \fbeta(P) = \bu^* \left(\proj_{\ph(\tbw)}\left(\nicefrac{\bu(P)}{\beta}\right)\right).
\end{aligned}
\end{equation}
The result follows from the definition of $\bu(P) = \left(\sum_{j,k} \muij P_{ijk} \ew_k\right)_{i=1}^n.$
\end{proof}


Overall, computing the gradient of the Moreau envelope boils down to a projection onto the permutahedron $\ph(\tbw)$. This projection was shown by several authors to be reducible to isotonic regression:
\begin{proposition}[Reduction to isotonic regression \citep{negrinho2014orbit,lim2016efficient,blondel2020fast}]\label{prop:isoto} Let $\sigma\in\permSn$ that sorts $\bz$ decreasingly, i.e. $z_{\sigma(1)} \geq \ldots \geq z_{\sigma(n)}.$
Let $\bx$ be a solution to isotonic regression on $\bzsig - \tbw$, i.e.
\begin{align}
    \bx = \argmin_{x'_1 \leq \ldots \leq x'_n} \frac{1}{2}\|\bx' - (\bzsig - \tbw)\|^2
\end{align}
Then we have: $ \proj_{\ph(\tbw)}(\bz) = \bz + \xvector{x_{\sigma^{-1}}}.$
\end{proposition}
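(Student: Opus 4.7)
The plan is to leverage the permutation symmetry of the permutahedron, reduce to the case of a sorted input, and then match the KKT conditions of the projection with those of isotonic regression.

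First, since $\ph(\tbw)$ is the convex hull of all coordinate permutations of $\tbw$, it is invariant under any such permutation. It follows that the projection commutes with permutations of its input: for every $\tau \in \permSn$, $\proj_{\ph(\tbw)}(\bz_\tau) = \bigl(\proj_{\ph(\tbw)}(\bz)\bigr)_\tau$. Taking $\tau = \sigma$ reduces the problem to computing $\proj_{\ph(\tbw)}(\bzsig)$, whose input is decreasingly sorted, and then applying $\sigma^{-1}$ to recover $\proj_{\ph(\tbw)}(\bz)$. This reduction is precisely the source of the $\bx_{\sigma^{-1}}$ re-indexing in the statement.

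Second, in the sorted frame I would show by a rearrangement / swap argument that $\by^{*} := \proj_{\ph(\tbw)}(\bzsig)$ is itself decreasingly sorted: if there were $i<j$ with $\by^{*}_i < \by^{*}_j$, exchanging these two coordinates would keep $\by$ in $\ph(\tbw)$ by symmetry and strictly decrease $\|\by - \bzsig\|^2$, contradicting optimality. Consequently, on the cone of decreasingly-sorted vectors the constraint $\by^{*} \in \ph(\tbw)$ collapses to the Hardy--Littlewood--Polya majorization inequalities $\sum_{i\le k}\by^{*}_i \le \sum_{i\le k}\tbw_i$ for $k<n$, with equality at $k=n$.

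Third, I would write the KKT conditions of the resulting finite-dimensional quadratic program. Associating multipliers $\mu_k \ge 0$ to each inequality and $\nu \in \reals$ to the equality, stationarity reads $\by^{*}_i - (\bzsig)_i = -\nu - \sum_{k=i}^{n-1}\mu_k$. Setting $\bx := \by^{*} - \bzsig$, one immediately reads off $\bx_{i+1} - \bx_i = \mu_i \ge 0$, so $\bx$ is monotone non-decreasing, and the complementary slackness on $\mu$ is exactly the level-set condition of isotonic regression. A parallel KKT derivation for the isotonic regression on target $\bzsig - \tbw$, with duals $\lambda_i \ge 0$, shows, after a cumulative tail-sum reparameterization $\lambda_i \leftrightarrow \sum_{k \ge i}\mu_k$, that the two KKT systems coincide; strict convexity of both objectives then forces their unique solutions to agree, and applying $\sigma^{-1}$ concludes the proof.

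The main obstacle is the careful bookkeeping between these two dual representations: the projection multipliers appear naturally as cumulative tail sums, whereas the isotonic multipliers appear as consecutive increments, and the global level of $\bx$ is pinned down by the equality multiplier $\nu$ through $\sum_i \by^{*}_i = \sum_i \tbw_i$. Reconciling these (together with sign conventions) is the only non-routine step; everything else is mechanical.
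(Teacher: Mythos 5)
The paper offers no proof of Proposition~\ref{prop:isoto}; it is imported from \citet{negrinho2014orbit}, \citet{lim2016efficient} and \citet{blondel2020fast}, so there is no in-paper argument to compare yours against. Your outline (permutation equivariance of the projection, order preservation via a swap, reduction to majorization constraints, KKT matching with isotonic regression) is the standard route taken in those references, and the first two steps are essentially fine --- note only that when $z_{\sigma(i)}=z_{\sigma(j)}$ the swap does not \emph{strictly} decrease the objective, and you need uniqueness of the projection onto a convex set to finish the contradiction.

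Two genuine gaps remain. First, the prefix-sum inequalities characterize membership in $\ph(\tbw)$ only for vectors that are themselves decreasingly sorted, yet your quadratic program carries multipliers only for those inequalities and none for the ordering constraints $\by_i\geq\by_{i+1}$. You must either retain those multipliers and show they can be taken to vanish, or instead work with the relaxation in which the prefix-sum constraints are imposed on arbitrary vectors (they are valid inequalities for all of $\ph(\tbw)$, so this is a true relaxation of the projection) and then prove that the relaxation's optimizer is decreasingly ordered, hence lies in $\ph(\tbw)$. That feasibility verification is the actual content of the reduction --- it is where the cited proofs do their work --- and is not mechanical bookkeeping. Second, if you carry out the KKT matching you defer, the vector $\by^*-\bzsig$ comes out as the \emph{non-decreasing} isotonic regression of $\tbw-\bzsig$, equivalently $\proj_{\ph(\tbw)}(\bz)=\bz-\xvector{x_{\sigma^{-1}}}$ with $\bx$ the \emph{non-increasing} regression of $\bzsig-\tbw$. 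The statement as printed (non-decreasing regression of $\bzsig-\tbw$ with a plus sign) fails even the coordinate-sum check: isotonic regression preserves the sum, so $\sum_i(z_i+x_i)=2\sum_i z_i-\sum_i\tbw_i$, which does not equal $\sum_i\tbw_i$ in general; concretely for $n=2$, $\bw=(1,0)$, $\bz=(2,0)$ the displayed formula gives $(3.5,1.5)$ while the projection is $(0,-1)$. Your plan, completed carefully, proves the sign-corrected identity rather than the one displayed; this discrepancy is exactly the kind of thing hiding in the ``bookkeeping'' you postpone, and it should be surfaced, not deferred.
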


Following these works, we use the Pool Adjacent Violators (\PAV) algorithm for isotonic regression, which gives a solution in $O(n)$ iterations given a sorted input \cite{best2000minimizing}. 
The algorithm for computing the projection is summarized in Alg. \ref{alg:grad-moreau} where we use the notation $\argsort(\bz) = \{\sigma\in \permSn: z_{\sigma(1)} \geq\ldots\geq z_{\sigma(n)}\}$ for permutations that sort $\bz\in\Re^n$ in decreasing order. Including the sorting of $\frac{\bu(P)}{\beta}$, it costs $O(n \log n)$ time and $O(n)$ space. 


\begin{remark}\label{rk:blondel}
Our method is related to the differentiable sorting operator of \citet{blondel2020fast}, which uses a regularization term to smooth the linear formulation of sorting. The regularized form can itself be written as a projection to a permutahedron. The problem they address is different since they differentiate the multi-dimensional sort operation, but eventually the techniques are similar because the smoothing is done in a similar way.
\end{remark}

\begin{remark}
We computed the gradient of $f^\beta(\rrk)=h^\beta(\bu(\rrk))$ with user utilities. The gradient of $f^\beta(\rrk)=h^\beta(\bv(\rrk))$ using item exposures is computed similarly: $\frac{\partial \fbeta}{\partial P_{ijk}}(P) = \projbetaPj \ew_k$  with $\projbetaP = \proj_{\ph(\tbw)}\left(\frac{\bv(P)}{\beta}\right)$.
\end{remark}

\subsection{Frank-Wolfe with smoothing}\label{sec:algo-algo}

We return to the optimization of the two-sided GGF objective \eqref{eq:def-2sidedggf}. In this section, we fix the parameters $(\iw,\ubw,\ibw)$ and consider the minimization of $f := -F_{\iw,\ubw,\ibw}$ over $\rrkS$. 
For $\beta > 0$ we denote by $h_1^\beta$ and $h_2^\beta$ the Moreau envelopes of $-\uGGF$ and $-\iGGF$ respectively. The smooth approximation of $f$ is then:
\begin{align}\label{eq:2sided-moreau}
    \fbeta(P) := (1-\iw) h_1^\beta(\bu(P)) + \iw h_2^\beta(\bv(P)).
\end{align}

Our algorithm \fwsmooth (Alg. \ref{alg:nonsmoothfw}) for minimizing $f$ uses the Frank-Wolfe method for nonsmooth optimization from \citet{lan2013complexity}\footnote{\citet{lan2013complexity} uses the smoothing scheme of \citet{nesterov2005smooth} which is in fact equal to the Moreau envelope (see \citep[Sec. 4.3]{beck2012smoothing}).}. Given a sequence $(\beta_t)_{t\geq 1}$ of positive values decreasing to $0$, the algorithm constructs iterates $\rrkt$ by applying Frank-Wolfe updates to $\fbetat$ at each iteration $t$. More precisely, \fwsmooth finds an update direction with respect
to $\grad \fbetat$ by computing:
\begin{align}\label{eq:fw-step}
    &\Qt = \argmin_{\rrk\in\rrkS} \xdp{\rrk}{\grad \fbetat(\rrktmo)}.
\end{align}
The update rule is $P^{(t)} = \rrktmo + \frac{2}{t+2} \big(Q^{(t)} - \rrktmo\big)$.

Before giving the details of the computation of \eqref{eq:fw-step}, we note that applying the convergence result of \citet{lan2013complexity}, and denoting $D_{\rrkS} =  \max\limits_{\rrk, \rrk' \in \rrkS}\|\rrk - \rrk'\|$ the diameter of $\rrkS$, we obtain\footnote{In more details, the convergence guarantee of \citet{lan2013complexity} uses the operator norm of $\bu$ and $\bv$, which we bound as follows: $\norm{u(P)}^2 \leq \sum_{i} \sum_{j,k} (\muij P_{ijk} \ew_k)^2 \leq \ew_1^2 \norm{P}^2$, because $\muij \in [0,1]$ $\ew_k \in [0,\ew_1],$ and similarly $\norm{\bv(P)}^2 \leq \ew_1^2 \norm{P}^2$. }:
\begin{proposition}[Th. 4, \citep{lan2013complexity}]\label{prop:converge} With $\beta_0 = \frac{2 D_{\rrkS} \ew_1}{\|\bw\|} $ and $\beta_t = \frac{\beta_0}{\sqrt{t}}$, \fwsmooth obtains the following convergence rate: $$f(P^{(T)}) - f(P^*) \leq \frac{2 D_{\rrkS} \ew_1 \|\bw\|}{\sqrt{T}}.$$
\end{proposition}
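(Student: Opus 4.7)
The plan is to apply Theorem 4 of \citet{lan2013complexity} for nonsmooth Frank-Wolfe with Moreau smoothing directly to our two-sided objective. Lan's theorem delivers, for the step size $\beta_t = \beta_0/\sqrt{t}$, a primal gap bound of the form $C_1 D_\rrkS^2 L/\sqrt{T} + C_2 \beta_0 \|\bw\|^2/\sqrt{T}\cdot(\text{something})$, where the two contributions come respectively from the Frank-Wolfe error on the smoothed surrogate (governed by its smoothness constant $L$) and from the pointwise error between $f$ and $f^{\beta_t}$. The bulk of the work is therefore to verify the correct instantiation of these constants in our ranking setting.

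The first step is to assemble the smoothness and approximation properties of $f^{\beta_t}$ through the composition with the linear operators $\bu$ and $\bv$. From Sec.~\ref{sec:algo-gradient}, $h_1^\beta$ and $h_2^\beta$ are $1/\beta$-smooth and satisfy $h_k^\beta \leq -\welfobjitem_k \leq h_k^\beta + \tfrac{\beta}{2}\|\bw_k\|^2$ (with $\bw_1 = \ubw$, $\bw_2 = \ibw$). To lift this to $\fbeta$, I would bound the operator norms of $\bu$ and $\bv$: from $\muij \in [0,1]$ and $\ew_k \in [0,\ew_1]$, the footnote's calculation $\|\bu(P)\|^2 \leq \ew_1^2 \|P\|^2$ gives $\|\bu\|_{\mathrm{op}} \leq \ew_1$, and the same bound holds for $\bv$. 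By the chain rule, $\fbeta$ is therefore $\ew_1^2/\beta$-smooth, and $|f(P) - \fbeta(P)| \leq \tfrac{\beta}{2}\bigl((1-\lambda)\|\ubw\|^2 + \lambda\|\ibw\|^2\bigr) \leq \tfrac{\beta}{2}\|\bw\|^2$ under the convention that $\|\bw\|$ denotes the relevant (combined) Lipschitz constant.

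The second step is to plug these constants into Lan's bound. The smoothness contribution grows like $\ew_1^2 D_\rrkS^2/\beta_0 \cdot 1/\sqrt{T}$ while the smoothing error contributes $\beta_0 \|\bw\|^2/\sqrt{T}$ (up to absolute constants absorbed by Lan's analysis). Balancing the two by choosing $\beta_0$ to equate them yields $\beta_0 = 2 D_\rrkS \ew_1 / \|\bw\|$, at which point both terms become $2 D_\rrkS \ew_1 \|\bw\|/\sqrt{T}$, giving the stated rate. The main obstacle is simply careful bookkeeping of constants: the key algorithmic ingredient (tractable evaluation of $\nabla \fbeta$ via Prop.~\ref{prop:grad-moreau}) is already established in Sec.~\ref{sec:algo-gradient}, and convexity of $-\ggfobj$ together with convexity of $\rrkS$ ensures that Lan's hypotheses are satisfied, so no new analysis beyond verifying the constants is required.
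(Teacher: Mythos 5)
Your proposal is correct and follows essentially the same route as the paper: the paper's own ``proof'' consists of invoking Theorem~4 of \citet{lan2013complexity} directly, with the only problem-specific ingredient being the operator-norm bounds $\norm{\bu(P)}^2 \leq \ew_1^2\norm{P}^2$ and $\norm{\bv(P)}^2\leq \ew_1^2\norm{P}^2$ (relegated to a footnote), which is exactly the calculation you carry out. Your additional bookkeeping on the smoothness constant $\ew_1^2/\beta$, the approximation gap $\tfrac{\beta}{2}\norm{\bw}^2$, and the balancing of the two error terms to obtain $\beta_0$ is more explicit than what the paper writes, but it is the same argument.
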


\begin{algorithm}
    \caption{\label{alg:nonsmoothfw}\fwsmooth. Alg. \ref{alg:grad-moreau} is used for $\protect\projbetaP^1$ and $\protect\projbetaP^2$.}
    \DontPrintSemicolon
     \SetKwInOut{Input}{input}\SetKwInOut{Output}{output}
     
     \Input{values $(\mu_{ij})$, $\#$ of iterations $T$, smoothing seq. $(\beta_t)_{t}$}
     \Output{ranking policy $\rrk^{(T)}$}
     
     Initialize $\rrk^{(0)}$  such that $\rrk_i^{(0)}$ sorts $\mui$ in decreasing order\;
     \For{t=1, \ldots, T}{
     Let \resizebox{0.7\linewidth}{!}{$\displaystyle\projbetaP^1 = \proj_{\ph(\utbw)}\left(\frac{\bu(\rrktmo)}{\beta_t}\right)$ and $\displaystyle\projbetaP^2=\proj_{\ph(\itbw)}\left(\frac{\bv(\rrktmo)}{\beta_t}\right)$}\;
     \For{i=1, \ldots, \n}{
          $\tempmuij =  (1 - \iw) \, \projbetaP_i^1\muij+\iw\projbetaPj^2$\;
          $~~\tilde{\sigma}_i \gets \topk( -\tilde{\mu}_i )$ \tcp{ Update direction \eqref{eq:fw-step}}
          }
     Let $\Qt \in \rrkS$ such that $\Qt_i$ represents  $\tilde{\sigma}_i$\;
     $\rrkt \gets (1- \frac{2}{t+2})\rrktmo +  \frac{2}{t+2}\Qt\,$.
     }
     Return $\rrk^{(T)}$.
\end{algorithm}

\paragraph{Efficient computation of the update direction} For smooth welfare functions of user utilities and item exposures, the update direction \eqref{eq:fw-step} can be computed with only one top-$K$ sorting operation per user \citep{do2021two}.
In our case, the update is given by the following result, where \resizebox{\linewidth}{!}{$\topk(\bz) = \{ \sigma \in \permSn: z_{\sigma(1)}\geq \ldots \geq z_{\sigma(K)} \text{ and } \forall k\geq K, z_{\sigma(K)} \geq z_{\sigma(k)}\}$} is the set of permutations that sort the $k$ largest elements in $\bz$.

\begin{proposition}\label{prop:fwgrad} Let $\tempmu$ defined by $\tempmuij =  (1 - \iw) \, \projbetaP_i^1\muij+\iw\projbetaPj^2\,$ where $\projbetaP^1 = \proj_{\ph(\utbw)}\left(\nicefrac{\bu(\rrktmo)}{\beta_t}\right)$ and $\projbetaP^2=\proj_{\ph(\itbw)}\left(\nicefrac{\bv(\rrktmo)}{\beta_t}\right)$.

For all $i \in \intint{n},$ let $\tilde{\sigma}_i \in \topk(-\tilde{\mu}_i)$ and $\Qt_i$ a permutation matrix representing $\tilde{\sigma}_i$.
Then $\Qt \in \argmin\limits_{\rrk \in \rrkS} \xdp{\rrk}{\grad \fbetat(\rrktmo)}.$
\end{proposition}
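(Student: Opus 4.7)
The plan is to exploit the linearity of the inner product $\xdp{\rrk}{\grad \fbetat(\rrktmo)}$ together with the structure of $\rrkS$ (product of bistochastic polytopes) to reduce the global minimization to an independent, per-user assignment problem, which is then solved by the rearrangement inequality.

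First, I would compute the gradient $\grad \fbetat(\rrktmo)$ explicitly. By linearity of $\fbetat$ in the two Moreau envelopes (see Eq.~\eqref{eq:2sided-moreau}) and by applying Proposition~\ref{prop:grad-moreau} once for the user-side smoothed GGF and once for the item-side one (the item-side analogue being noted in the remark after Proposition~\ref{prop:grad-moreau}), I obtain componentwise
\begin{align*}
    \frac{\partial \fbetat}{\partial P_{ijk}}(\rrktmo) = \bigl( (1-\iw)\,\projbetaP^{1}_i\,\muij + \iw\,\projbetaP^{2}_j \bigr)\,\ew_k = \tempmuij\,\ew_k,
\end{align*}
with $\projbetaP^1$ and $\projbetaP^2$ defined as in the statement. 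This already collapses the gradient into the simple separable form $\ew_k\,\tempmuij$.

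Next I would use that $\rrkS$ is a product of bistochastic polytopes, one per user. Therefore
\begin{align*}
    \min_{\rrk\in\rrkS} \xdp{\rrk}{\grad \fbetat(\rrktmo)} = \sum_{i=1}^n \min_{\rrk_i \in \mathcal{B}_m} \sum_{j,k} \ew_k\,\tempmuij\,\rrk_{ijk},
\end{align*}
where $\mathcal{B}_m$ is the Birkhoff polytope of $m\times m$ bistochastic matrices. Since the inner objective is linear in $\rrk_i$, by the Birkhoff--von Neumann theorem the minimum over $\mathcal{B}_m$ is attained at a vertex, i.e.\ at a permutation matrix. Writing such a vertex as the matrix of a permutation $\sigma_i\in\permSn$ (where $j=\sigma_i(k)$ means item $j$ is placed at position $k$), the inner problem becomes
\begin{align*}
    \min_{\sigma_i \in \mathfrak{S}_m} \sum_{k=1}^m \ew_k\,\tilde{\mu}_{i,\sigma_i(k)} = \min_{\sigma_i \in \mathfrak{S}_m} \sum_{k=1}^K \ew_k\,\tilde{\mu}_{i,\sigma_i(k)},
\end{align*}
using $\ew_k=0$ for $k>K$.

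Finally, since $\ew_1\geq \ldots\geq \ew_K\geq 0$, the rearrangement inequality shows that this sum is minimized precisely when the $K$ smallest values of $(\tempmuij)_{j\in\itemS}$ are matched, in increasing order, to the positions $1,\ldots,K$; equivalently, $\sigma_i$ must sort the $K$ largest entries of $-\tilde{\mu}_i$ in decreasing order, i.e.\ $\sigma_i\in\topk(-\tilde{\mu}_i)$. Positions beyond $K$ receive weight $0$ and may be filled arbitrarily, which is exactly what the definition of $\topk$ allows. Hence any $\Qt$ whose row blocks are permutation matrices representing $\tilde{\sigma}_i\in\topk(-\tilde{\mu}_i)$ achieves the minimum, proving the claim. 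I expect no real obstacle here; the only subtlety is carefully handling the truncation at $K$ to justify that any permutation consistent with the top-$K$ ordering is optimal, which follows directly from $\ew_k=0$ for $k>K$.
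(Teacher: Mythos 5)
Your proof is correct and follows essentially the same route as the paper: compute the gradient componentwise via Proposition~\ref{prop:grad-moreau} (and its item-side analogue) to get $\tempmuij\,\ew_k$, then reduce the linear minimization over $\rrkS$ to a per-user sorting problem settled by the rearrangement inequality. The only difference is that you spell out the per-user separability, Birkhoff--von Neumann, and rearrangement steps explicitly, whereas the paper delegates them to a cited lemma; your handling of the truncation at $K$ via $\ew_k=0$ is also fine (modulo a harmless notational slip writing $\permSn$ for the set of permutations of $\itemS$).
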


\begin{proof}
Using the expression of the gradient of the Moreau envelope derived in Proposition \ref{prop:grad-moreau}, eq. \eqref{eq:grad-moreau}, we have:\begin{align*}
        \resizebox{\linewidth}{!}{$\displaystyle\frac{\partial \fbetat}{\partial P_{ijk}}(\rrktmo)  = (1 - \iw)  \frac{\partial }{\partial P_{ijk}}(h_1^{\beta_t}(\bu(\rrktmo)) + \iw  \frac{\partial }{\partial P_{ijk}}(h_2^{\beta_t}(\bv(\rrktmo))$}   
\end{align*}
And thus $ \frac{\partial \fbetat}{\partial P_{ijk}}(\rrktmo) = \tempmuij \times \ew_k$.
The result then follows from \citep[Lem. 3]{do2021two} and is a consequence of the rearrangement inequality \cite{hardy1952inequalities}: $\Qt$ is obtained by sorting $\tempmuij$ in increasing order, or equivalently, by sorting $-\tempmuij$ in decreasing order.
\end{proof}

Since the computation of the gradient of Moreau envelopes costs $O(n\ln n+n\ln m)$ operations using Alg.~\ref{alg:grad-moreau}, then by Prop. \ref{prop:fwgrad} at each iteration, the cost of the algorithm is dominated by the top-K sort per user, each of which has amortized complexity of $O(m+K\ln K)$: 
\begin{proposition}\label{prop:complexity} Each iteration costs $O(nm + nK\ln K)$ operations.
The total amount of storage required is $O(nK T)$.
\end{proposition}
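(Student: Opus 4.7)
The plan is to verify Proposition \ref{prop:complexity} by walking through each step of Alg.~\ref{alg:nonsmoothfw} and identifying the dominant operation, while carefully observing that nothing in the algorithm ever requires materializing the $n \times m \times m$ tensor $\rrkt$ explicitly. The whole accounting hinges on keeping $\rrkt$ in a factored form and maintaining the utility vectors $\bu(\rrkt), \bv(\rrkt)$ incrementally.

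First I would bound the per-iteration time. Given $\bu(\rrktmo)$ and $\bv(\rrktmo)$, Prop.~\ref{prop:grad-moreau} together with Alg.~\ref{alg:grad-moreau} computes $\projbetaP^1$ in $O(n \ln n)$ and $\projbetaP^2$ in $O(m \ln m)$. Assembling the scalars $\tempmuij = (1-\iw)\projbetaP_i^1 \muij + \iw \projbetaPj^2$ for every $(i,j)$ pair costs $O(nm)$. The $n$ top-$K$ sorts in Line~6 cost $O(m + K \ln K)$ each (quickselect followed by sorting the selected block), aggregating to $O(nm + nK \ln K)$; this absorbs the $O(n \ln n + m \ln m)$ projection cost and yields the claimed per-iteration bound.

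The remaining subtlety is that $\bu(\rrktmo)$ and $\bv(\rrktmo)$ themselves must be available at the start of each iteration. I would exploit linearity: since $\rrkt = (1-\gamma_t)\rrktmo + \gamma_t \Qt$ with $\gamma_t = 2/(t+2)$, we maintain $\bu(\rrkt) = (1-\gamma_t)\bu(\rrktmo) + \gamma_t \bu(\Qt)$ and similarly for $\bv$. Because each $\Qt_i$ is a permutation matrix and $\ew_k = 0$ for $k > K$, the entry $\bu_i(\Qt) = \sum_{k=1}^{K} \mu_{i,\tilde{\sigma}_i(k)} \ew_k$ evaluates in $O(K)$ time, so computing $\bu(\Qt)$ and $\bv(\Qt)$ together costs $O(nK)$ and incorporating them into $\bu(\rrkt), \bv(\rrkt)$ costs $O(n+m)$; all absorbed in the $O(nm + nK \ln K)$ budget.

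Finally, the $O(nKT)$ storage bound follows from never materializing $\rrkt$ explicitly: the algorithm stores only the list of $t$ top-$K$ permutations $\tilde{\sigma}^{(1)}, \ldots, \tilde{\sigma}^{(t)}$ (each recording $K$ indices per user, i.e., $O(nK)$ space) together with the running vectors $\bu(\rrkt), \bv(\rrkt)$ of total size $O(n+m)$. Summed over $T$ iterations this gives $O(nKT)$. The main point that requires care is the sufficiency of this factored representation for every step, which holds because Alg.~\ref{alg:nonsmoothfw} accesses $\rrktmo$ only through the vectors $\bu(\rrktmo)$ and $\bv(\rrktmo)$, both maintained incrementally above.
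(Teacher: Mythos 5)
Your accounting matches the paper's (brief) justification: the paper likewise attributes the per-iteration cost to the $O(nm)$ assembly of $\tempmu$ plus the $n$ amortized $O(m+K\ln K)$ top-$K$ sorts, with the $O(n\ln n + m\ln m)$ permutahedron projections absorbed, and obtains the $O(nKT)$ storage from the sparse weighted-sum-of-permutations representation of $\rrkt$ that Frank--Wolfe produces ``for free.'' Your additional observations---that $\rrkt$ is never materialized and that $\bu(\rrkt),\bv(\rrkt)$ are maintained incrementally in $O(nK)$ per step via linearity of the utilities---are exactly the implicit details the paper relies on, so the proposal is correct and follows essentially the same route.
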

In conclusion, \fwsmooth has a cost per iteration similar to the standard Frank-Wolfe algorithm for ranking with smooth objective functions. The cost of the non-smoothness of the objective function is a convergence rate of $1/\sqrt{T}$, while the Frank-Wolfe algorithm converges in $O(1/T)$ when the objective is smooth \citep{clarkson2010coresets}.

Moreover, the algorithm produces a \emph{sparse representation} of the stochastic ranking policy as a weighted sum of permutation matrices. In other words, this gives us a Birkhoff-von-Neumann decomposition \cite{birkhoff1940lattice} of the bistochastic matrices \emph{for free}, avoiding the overhead of an additional decomposition algorithm as in existing works on fair ranking \citep{singh2018fairness,wang2021user,su2021optimizing}. 

\section{Experiments} \label{sec:xps}

We first present our experimental setting for recommendation of music and movies, together with the fairness criteria we explore and the baselines we consider. These fairness criteria have been chosen because they were used in the evaluation of prior work, and they exactly correspond to the optimization of a GGF. We thus expect our two-sided GGF $\ggfobj$ to fare better than the baselines, because they allow for the optimization of the exact evaluation criterion. We provide experimental results that demonstrate this claim in Sec. \ref{sec:xp-onesided-tradeoff}. Note that the GGFs are extremely flexible as we discussed in Sec.~\ref{sec:framework}, so our experiments can only show a few illustrative examples of fairness criteria that can be defined with GGFs. In Sec. \ref{sec:xp-convergence}, we show the usefulness of \fwsmooth compared to the simpler baseline of Frank-Wolfe with subgradients.

\subsection{Experimental setup} \label{sec:xp-setup}

Our experiments are implemented in Python 3.9 using PyTorch\footnote{\url{http://pytorch.org}}. For the \PAV algorithm, we use the implementation of Scikit-Learn.\footnote{\url{https://github.com/scikit-learn/scikit-learn/blob/main/sklearn/isotonic.py}}

\subsubsection{Data and evaluation protocol} 

We present experiments on two recommendation tasks, following the protocols of \citep{do2021two,patro2020fairrec}. First, we address music recommendation with \textbf{Lastfm-2k} from \citet{Cantador:RecSys2011} which contains real listening counts of $2k$
users for $19k$ artists on the online music service Last.fm\footnote{\url{https://www.last.fm/}}. We filter the $2,500$ items having the most listeners. In order to show how the algorithm scales, we also consider the \textbf{MovieLens-20m} dataset \cite{harper2015movielens}, which contains ratings in $[0.5,5]$ of movies by users, and we select the top $15,000$ users and items with the most interactions. 

We use an evaluation protocol similar to \cite{patro2020fairrec,do2021two,wang2021user}. For each dataset, a full user-item preference matrix $(\mu_{i,j})_{i,j}$ is obtained by standard matrix factorization algorithms\footnote{Using the Python library Implicit: \url{https://github.com/benfred/implicit} (MIT License).} from the incomplete interaction matrix, following the protocol of \citep{do2021two}. Rankings are inferred from these estimated preferences. The exposure weights $\ew$ are the standard weights of the \emph{discounted cumulative gain} (DCG) (also used in e.g., \cite{singh2018fairness,biega2018equity,morik2020controlling}): $\forall k \in \intint{K}, \ew_k = \frac{1}{\log_2(1+k)}.$ 

The generated $\muij$ are used as ground truth to evaluate rankings, in order to decouple the fairness evaluation of the ranking algorithms from the evaluation of biases in preference estimates (which are not addressed in the paper). The results are the average of three repetitions of the experiments over different random train/valid/test splits used to generate the $\mu_{ij}$.

\subsubsection{Fairness criteria} \label{sec:xp-fairnesscriteria}

We remind two fairness tasks studied in the ranking literature and presented in Section \ref{sec:fairnesstasks}, and describe existing approaches proposed to address them, which we consider as baselines for comparison with our two-sided GGF \eqref{eq:def-2sidedggf} $\ggfobj$.

\paragraph{Task 1: Trade-offs between user utility and inequality between items} We use the two-sided GGF $\ggfobj$ instantiated as in Eq. \eqref{eq:def:eqexposure}, i.e., with $\ubw = (1, \ldots, 1)$ and $\ibw_j = \frac{m - j + 1}{m}$. This corresponds to a trade-off function between the sum of user utilities and a GGF for items with the Gini index weights, where the trade-off is controlled by varying $\lambda \in (0,1).$ We remind though that unlike the standard Gini index, the GGF is un-normalized (see eq. \eqref{eq:def:giniindex}, Sec \ref{sec:GGF-pres}). 

We use three baselines for this task. 

First, since the Gini index is non-differentiable, \citep{do2021two} proposed a differentiable surrogate using the standard deviation (std) instead, which we refer to as \equalexpo:
\begin{align}\label{eq:std-equalexposure}
    F^\util(P) = \sum_{i=1}^n u_i(P) -  \frac{\lambda}{m}\sqrt{\sum_{j=1}^m\left(v_j(P) - \frac{1}{m} \sum_{j'=1}^m v_{j'}(P)\right)^2}
\end{align}

Second, \citet{patro2020fairrec} address the trade-off of Task 1, since they compare various recommendation strategies based on the utility of users and the Lorenz curves of items (see \cite[Fig. 1]{patro2020fairrec}), recalling that the standard Gini index is often defined as $1-2A$ where $A$ is the area under the Lorenz curve \citep{yitzhaki2013more}. Their fairness constraints are slightly different though, as their algorithm \patro\footnote{\citep{patro2020fairrec} consider unordered recommendation lists with a uniform attention model. We transform them into ordered lists using the order output by \patro, and adapt the item-side criterion of minimal exposure to the position-based model.}  guarantees envy-freeness for users, and a minimum exposure of $\frac{\lambda n\norm{\ew}}{m}$ for every item, where $\lambda$ is the user-item tradeoff parameter.

Finally, we use the additive welfare function \eqref{eq:def-qua} (refered to as \welf) with the recommended values $\alpha_1\in\{-2,0,1\}$ and $\alpha_2=0$ \citep{do2021two}, and varying $\lambda\in(0,1)$ as third baseline. We only report the result of $\alpha_1=1$ since it obtained overall better performances on this task.

\paragraph{Task 2: Two-sided fairness} We consider trade-offs between the cumulative utility of the $q$ fraction of worst-off users, where $q 
\in \{0.25, 0.5\},$ and inequality between items measured by the Gini index, as in \citep{do2021two}. For this task, we instantiate the two-sided GGF $\ggfobj$ as follows: the GGF for users is given by Eq. \eqref{eq:quantile} with parameters $(q, \omega)$ in $\{0.25, 0.5\} \times \{0.25, 0.5, 1\}$, and the GGF for items uses the Gini index weights $w_j = \frac{m-j+1}{m}.$ We generate trade-offs between user fairness and item fairness by varying $\lambda \in (0,1).$ 

The baseline approach for this task is \welf, the additive welfare function \eqref{eq:def-qua}, still with the recommended values $\alpha_1\in\{-2,0,1\}$ and $\alpha_2=0$ and varying $\lambda\in(0,1)$. We only report the results of $\alpha_1=-2$ as they obtained the best performances on this task.

\subsection{Results} \label{sec:xp-onesided-tradeoff}

\begin{figure*}
\centering    
\subfigure[Total user utility  vs. inequality between items.   GGF is instantiated as \eqref{eq:def:eqexposure}.]{\label{fig:lastfm-total}\includegraphics[width=0.23\linewidth]{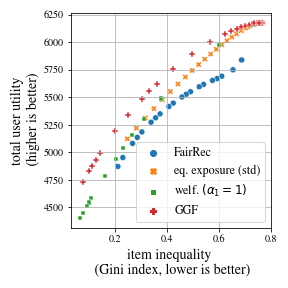}}
\subfigure[Utility of the 25\% worse-off users  vs. inequality between items.   GGF is instantiated as \eqref{eq:quantile} with parameters $(q,\omega)$.]{\label{fig:lastfm-user25}\includegraphics[width=0.23\linewidth]{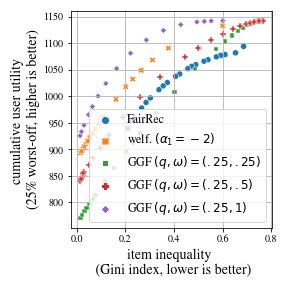}}
\subfigure[Utility of the 50\% worse-off users  vs. inequality between items.   GGF is instantiated as \eqref{eq:quantile} with parameters $(q,\omega)$.]{\label{fig:lastfm-user50}\includegraphics[width=0.23\linewidth]{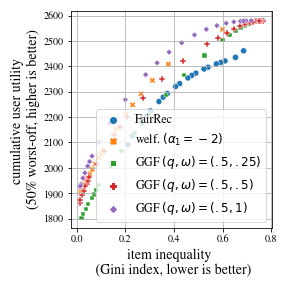}}
\subfigure[Optimization of GGF \eqref{eq:def:eqexposure} with $\iw=0.5$]{\label{fig:lastfm-gini-cvg}\includegraphics[width=0.23\linewidth]{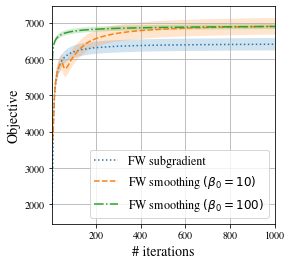}}
\subfigure[Total user utility  vs. inequality between items.   GGF is instantiated as \eqref{eq:def:eqexposure}.]{\label{fig:ml20m-total}\includegraphics[width=0.23\linewidth]{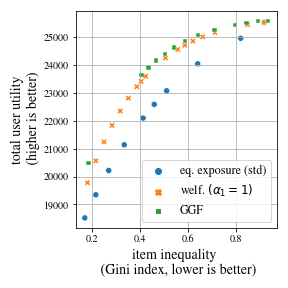}}
\subfigure[Utility of the 25\% worse-off users  vs. inequality between items.   GGF is instantiated as \eqref{eq:quantile} with parameters $(q,\omega)$.]{\label{fig:ml20m-user25}\includegraphics[width=0.23\linewidth]{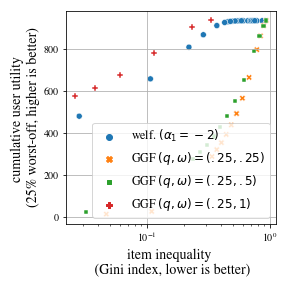}}
\subfigure[Utility of the 50\% worse-off users  vs. inequality between items.   GGF is instantiated as \eqref{eq:quantile} with parameters $(q,\omega)$.]{\label{fig:ml20m-user50}\includegraphics[width=0.23\linewidth]{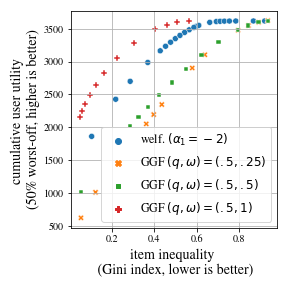}}
\subfigure[Optimization of GGF \eqref{eq:def:eqexposure} with $\iw=0.5$]{\label{fig:ml20m-gini-cvg}\includegraphics[width=0.23\linewidth]{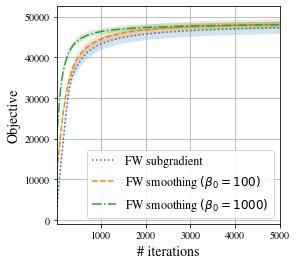}}
\caption{\label{fig:onesided-reco}Summary of the results on \lastfmsmall (top row) and MovieLens (bottom row). (Left 3 columns): Trade-offs achieved by competing methods on various fairness criteria, when varying $\lambda \in [0,1].$ (Right column): Convergence of \fwsubgrad compared to \fwsmooth, for various values of $\beta_0.$ \fwsubgrad is not guaranteed to converge to an optimum.}
\end{figure*}

We now present experiments that illustrate the effectiveness of the two-sided GGF approach on Task 1 and 2.

For each fairness method, Pareto frontiers are generated by varying $\iw$. 
Since \citet{patro2020fairrec}'s algorithm \patro does not scale, we compare to \patro only on \lastfmsmall. 

We optimize $\ggfobj$ using \fwsmooth with $\beta_0=100$ and $T=5k$ for \lastfmsmall, and $\beta_0=1000$ and $T=50k$ for MovieLens. $F^\welf$ and $F^\util$ are optimized with the Frank-Wolfe method of \citep{do2021two} for $T=1k$ and $T=5k$ iterations respectively for \lastfmsmall and MovieLens. This is the number of iterations recommended by \citep{do2021two}, while we need more interactions for \fwsmooth because its convergence is $O(\frac{1}{\sqrt{T}})$ rather than $O(\frac{1}{T})$ because of non-smoothness.

We first focus on \lastfmsmall. On Task 1, \emph{Fig. \ref{fig:lastfm-total}}, the GGF (red $+$ curve) obtains the best trade-off between total utility of users and Gini inequality between items, compared to \patro and \equalexpo. It fares better than \equalexpo (orange $\times$) on this task because \equalexpo reduces inequality between items by minimizing the std of exposures, while GGF with weights $w^2_j = \frac{m-j+1}{m}$ minimizes the Gini index. This shows that Task 1 can be addressed by directly optimizing the Gini index, thanks to GGFs with \fwsmooth.

For Task 2, \emph{Fig. \ref{fig:lastfm-user25} and \ref{fig:lastfm-user50}} depicts the trade-offs achieved between the utility of the $25\%$ / $50\%$ worst-off users and inequality among items. 
First, we observe that the more weight $\omega$ is put on the $q$-\% worst-off in GGF, the higher the curve, which is why we observe the ordering green $\square$ $\prec$ red $+$ $\prec$ purple $\diamond$ for GGF, on both $25\%$ and $50\%$ trade-offs plots. Second, as expected, \welf is outperformed by our two-sided GGF with instantiation \eqref{eq:quantile} 
and $(q,\omega)=(q,1)$ (purple $\diamond$), since it corresponds to the optimal settings for this task. 

Figures \ref{fig:ml20m-total},\ref{fig:ml20m-user25},\ref{fig:ml20m-user50} illustrates the same trade-offs on MovieLens. Results are qualitatively similar: by adequately parameterizing GGFs, we obtain the best guarantees on each fairness task.

Overall, these results show that even though the baseline approaches obtain non-trivial performances on the two fairness tasks above, the direct optimization of the trade-offs involving the Gini index or points of the Lorenz curves, which is possible thanks to our algorithm, yields significant performance gains. Moreover, we reiterate that these two tasks are only examples of fairness criteria that GGFs can formalize, since by varying the weights we can obtain all Lorenz-efficient rankings (Prop. \ref{lem:all-lorenz}). 


\subsection{Convergence diagnostics} \label{sec:xp-convergence}


We now demonstrate the usefulness of \fwsmooth for optimizing GGF objectives, compared to simply using the Frank-Wolfe method of \citep{do2021two} with a subgradient of the GGF (\fwsubgrad). We note that a subgradient of $g_{\bw}(\bx)$ is given by $\xvector{w_{\sigma^{-1}}},$ where 
$\sigma \in \argsort(-\bx)$. 
More precisely, \fwsubgrad is also equivalent to using subgradients of $-\uGGF$ and $-\iGGF$ in Line 3 of Alg. \ref{alg:nonsmoothfw}, instead of $\grad \fbetat (\rrkt), $ ignoring the smoothing parameters $\beta_t$. \fwsubgrad is simpler than \fwsmooth, but it is not guaranteed to converge \citep{nesterov2018complexity}. The goal of this section is to assess whether the smoothing is necessary in practice.

We focus on the two-sided GGF \eqref{eq:def:eqexposure} of Task 1 on \lastfmsmall and MovieLens, using \fwsubgrad and \fwsmooth with different values of $\beta_0$. Figure 
\ref{fig:lastfm-gini-cvg} depicts the objective value as a function of the number of iterations, averaged over three seeds (the colored bands represent the std), on \lastfmsmall. We observe that \fwsubgrad (blue dotted curve) plateaus at a suboptimum. In contrast, \fwsmooth converges (orange dotted and green dash-dot curves), and the convergence is faster for larger $\beta_0$. On MovieLens (Fig \ref{fig:ml20m-gini-cvg}), \fwsubgrad converges to the optimal solution, but it is still slower than \fwsmooth with $\beta_0=1000$.

In conclusion, even though \fwsubgrad reaches the optimal performance on Movielens for this set of parameters, it is still possible that \fwsubgrad plateaus at significantly suboptimal solutions. The use of smoothing is thus not only necessary for theoretical convergence guarantees, but also in practice. In addition, \fwsmooth has comparable computational complexity to \fwsubgrad since the computation cost is dominated by the sort operations in Alg.~\ref{alg:nonsmoothfw}.

\section{Reciprocal recommendation} \label{sec:reciprocal}

\subsection{Extension of the framework and algorithm}

We show that our whole method for fair ranking readily applies to reciprocal recommendation tasks, such as the recommendation of friends or dating partners, or in job search platforms.

\paragraph{Reciprocal recommendation framework} The recommendation framework we discussed thus far depicted \emph{``one-sided'' recommendation}, in the sense that only items are being recommended. In \emph{reciprocal recommendation} problems \citep{palomares2021reciprocal}, users are also items who can be recommended to other users (the item \emph{per se} is the user's profile or CV), and they have preferences over other users. 

In this setting, $n = m$ and $\muij$ denotes the mutual preference value between $i$ and $j$ (e.g., the probability of a ``match'' between $i$ and $j$). Following \citep{do2021two}, we extend our previous framework to reciprocal recommendation by introducing the \emph{two-sided utility} of a user $i$, which sums the utility $\Uover_i(P)$ derived by $i$ from the recommendations it gets, and the utility $\Vover_i(P)$ from being recommended to other users:
\begin{align*}
    &u_i(P) = \Uover_i(P) + \Vover_i(P) \, = \sum_{i,j} (\muij + \muji) P_{ij}^\intercal \ew \\
    &\text{where } ~~\Uover_i(P) = \sum_{j=1}^n \muij P_{ij}^\intercal \ew \quad \text{ and } \quad \Vover_i(P) = \sum_{j=1}^n \muij P_{ji}^\intercal \ew\,.
\end{align*}

\paragraph{Objective and optimization} The two-sided GGF objective \eqref{eq:def-2sidedggf} in reciprocal recommendation simply becomes one GGF of two-sided utilities, and it is specified by a single weighting vector $\bw$: 
\begin{align}\label{eq:ggf-reciprocal}
  \max\limits_{\rrk \in \rrkS}\{F_{\bw}(P) := g_{\bw}(\bu(P))\}.   
\end{align}

The choice of $\bw$ controls the degree of priority to the worse-off in the user population. We show in our experiments in Section \ref{sec:xp-reciprocal} that in reciprocal recommendation too, the GGF objective can be adequately parameterized to address existing fairness criteria.

$F_{\bw}$ can be optimized using our algorithm \fwsmooth. Since there is only one GGF, the subroutine Alg. \ref{alg:grad-moreau} is simply used once per iteration to project onto $\ph(\tbw),$ and obtain $\by = \proj_{\ph(\tbw)}\left(\frac{\bu(\rrk^{(t-1)})}{\beta_t}\right)$ in Line 3 of Algorithm \ref{alg:nonsmoothfw}. Line 5 becomes $\tempmuij = (1-\lambda) y_i \muij + \lambda y_j \muji.$

Our method for fair ranking is thus general enough to address both one-sided and reciprocal recommendation, using the notion of two-sided utility from \citet{do2021two}. 

\subsection{Experiments} \label{sec:xp-reciprocal}

Similarly to our experiments in Sec. \ref{sec:xp-onesided-tradeoff}, the goal of these experiments is to demonstrate that in reciprocal recommendation, GGF can be parameterized to exactly optimize for existing fairness criteria, outperforming previous approaches designed to address them. 

\begin{figure}
\centering    
\subfigure[Total utility vs. inequality. GGF is instantiated as \eqref{eq:reciprocal-gini} and we vary $\lambda.$]{\label{fig:twitter13k-total}\includegraphics[width=0.49\linewidth]{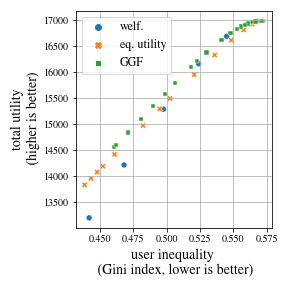}}
\subfigure[Total utility vs. utility of the 25\% worse-off. GGF is instantiated as \eqref{eq:quantile} with $q=0.25$ and varying $\omega.$]{\label{fig:twitter13k-user25}\includegraphics[width=0.49\linewidth]{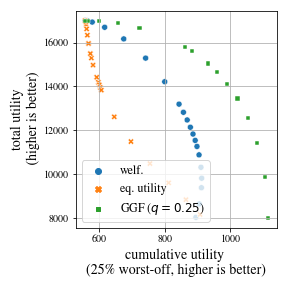}}
\caption{\label{fig:twitter13k-paretofronts}Fairness trade-offs achieved by competing methods on reciprocal recommendation on Twitter. They are generated by varying $\lambda$ for \equalu and GGF, and $\alpha$ for \welf.
}
\end{figure}

\paragraph{Data} We reproduce the experimental setting of \citep{do2021two} who also study fairness in reciprocal recommendation. We simulate a friend recommendation task from the Higgs \textbf{Twitter} dataset \citep{de2013anatomy}, which contains directed follower links on the social network Twitter. We consider a mutual follow as a ``match'', and we keep users having at least 20 matches, resulting in a subset of $13k$ users. We estimate mutual scores $\muij$ (i.e., match probabilities) by matrix factorization.

\subsubsection{Fairness criteria} Similarly to Section \ref{sec:xp-fairnesscriteria}, we state two fairness tasks that previously appeared in the literature, we instantiate the GGF objective $F_{\bw}$ for each task and describe existing baselines.

\paragraph{Task 1: Trade-offs between total utility and inequality of utility among users} Although reciprocal recommendation received less attention in the fairness literature, an existing requirement is to mitigate inequalities in utility between users \citep{jia2018online,basu2020framework}, similarly to the \equalexpo criterion in one-sided recommendation. This leads to a trade-off between the sum of utilities and inequality typically measured by the Gini index. For Task 1, we use the GGF \eqref{eq:ggf-reciprocal} with $\bw_i = (1 - \lambda) + \lambda \cdot \frac{n - i + 1}{n},$ which yields a trade-off function between the sum of utilities and inequality of utilities, and we vary $\lambda$ in $(0,1)$ to generate such trade-offs:
\begin{align}\label{eq:reciprocal-gini}
    F_{\bw}(P) = (1 - \lambda) \sum_{i=1}^n u_i(P) + \lambda \sum_{i=1}^n \frac{n - i + 1}{n} u_i\sort(P).
\end{align}

We use two baselines for this task.

First, similarly to \equalexpo, to bypass the nonsmoothness of the Gini index, \citep{do2021two} optimize a surrogate with std, named \equalu:\begin{align}\label{eq:def-eq-util}
    F^\util(P) = \sum_{i=1}^n u_i(P) -  \frac{\lambda}{n}\sqrt{\sum_{i=1}^n\left(u_i(P) - \frac{1}{n} \sum_{i'=1}^n u_{i'}(P)\right)^2}.
\end{align}

Second, the welfare function \welf \eqref{eq:def-qua} of \citep{do2021two} is used in reciprocal recommendation as a single sum: $F^\welf(P) = \sum_{i=1}^n \phi(u_i(P), \alpha)$ where $\phi$ is defined in Sec. \ref{sec:fairnesstasks}. We study \welf as baseline by varying $\alpha,$ which controls the redistribution of utility in the user population.

\paragraph{Task 2: Trade-offs between total utility and utility of the worse-off} The main task studied by \citep{do2021two} with \welf is to trade-off between the total utility and the cumulative utility of the $q$ fraction of worse-off users. For this task, we instantiate the GGF with \eqref{eq:quantile}, with fixed quantile $q =0.25$ and we vary $\omega$ to generate trade-offs between total utility and cumulative utility of the $25\%$ worst-off.

We compare it to the \welf baseline where $\alpha$ is varied as in \citep{do2021two}.

\subsubsection{Fairness trade-offs results}

\paragraph{Results} We now demonstrate that in reciprocal recommendation too, GGF is the most effective approach in addressing existing fairness criteria. We optimize the GGF $F_{\bw}(P)$ using \fwsmooth with $\beta_0=10$ for $T=50k$ iterations, and optimize $F^\welf$ and $F^\util$ using Frank-Wolfe for $T=5k$ iterations.

Figure \ref{fig:twitter13k-paretofronts} depicts the trade-offs obtained by the competing approaches on the fairness tasks 1 and 2, on the Twitter dataset. Fig. \ref{fig:twitter13k-total} illustrates the superiority of GGF (green $\square$) on Task 1, despite good performance of the baselines \equalu (orange $\times$) and \welf (blue $\circ$). As in one-sided recommendation with \equalexpo, the reason why \equalu achieves slightly worse trade-offs on this fairness task is because it minimizes the std as a surrogate to the Gini index, instead of the Gini index itself as GGF does. For Task 2, on Fig.\ref{fig:twitter13k-user25}, we observe that GGF with parameterization \eqref{eq:quantile} (green $\square$) is the most effective. This is because unlike the \welf approach (blue $\circ$) of \cite{do2021two} who address this fairness task, this form of GGF is exactly designed to optimize for utility quantiles.

\section{Related work}  \label{sec:related}

\paragraph{Algorithmic fairness} Fairness in ranking and recommendation systems is an active area of research. Since recommender systems involve multiple stakeholders \citep{burke2017multisided,abdollahpouri2020multistakeholder}, fairness has been considered from the perspective of both users and item producers. On the user side, a common goal is to prevent disparities in recommendation performance across sensitive groups of users \citep{mehrotra2017auditing,ekstrand2018all}. On the item side, authors aim to prevent winner-take-all effects \citep{abdollahpouri2019unfairness} by redistributing exposure across groups of producers, either towards equal exposure, or equal ratios of exposure to relevance \citep{singh2018fairness,biega2018equity,diaz2020evaluating,kletti2022introducing}, sometimes measured by the classical Gini index \citep{morik2020controlling,wilkie2014best}. 

Some authors consider fairness for both users and items, often by applying existing user or item criteria simultaneously to both sides, such as  \citep{basu2020framework,wu2021tfrom,wang2021user}. \citep{patro2020fairrec,DBLP:journals/corr/abs-2104-14527} instead discuss two-sided fairness with envy-freeness as user-side criterion, while \citep{deldjoo2021flexible} propose to use generalized cross entropy to measure unfairness among sensitive groups of users and items. \citep{wu2021multi} recently considered two-sided fairness in recommendation as a multi-objective problem, where each objective corresponds to a different fairness notion, either for users or items. Similarly, \citet{mehrotra2020bandit} aggregate multiple recommendation objectives using a GGF, in a contextual bandit setting. In their case, the aggregated objectives represent various metrics (e.g., clicks, dwell time) for various stakeholders
. Unlike these two works \citep{wu2021multi,mehrotra2020bandit}, in our case the multiple objectives are the individual utilities of each user and item, and our goal is to be fair towards each entity by redistributing utility. To our knowledge, we are the first to use GGFs as \emph{welfare functions} of users' and items' utilities for two-sided fairness in rankings.

Reciprocal recommender systems received comparatively less attention in the fairness literature, to the exception of \citep{jia2018online,xia2015reciprocal,paraschakis2020matchmaking}. The closest to our work is the additive welfare approach of \citep{do2021two}, which addresses fairness in both one-sided and reciprocal recommendation, and is extensively discussed in the paper, see Sec. \ref{sec:framework}.

In the broader fair machine learning community, several authors advocated for economic concepts \citep{finocchiaro2020bridging}, using inequality indices to quantify and mitigate unfairness \citep{wilkie2015retrievability,speicher2018unified,heidari2018fairness,lazovich2022measuring}, taking an axiomatic perspective \citep{golz2019paradoxes,cousins2021axiomatic,williamson2019fairness} or applying welfare economics principles \citep{hu2020fair,rambachan2020economic}. GGFs, in particular, were recently applied to fair multi-agent reinforcement learning, with multiple reward functions \citep{busa2017multi,siddique2020learning,zimmer2021learning}. These works consider sequential decision-making problems without ranking, and their GGFs aggregate the objectives of a few agents (typically $n < 20$), while in our ranking problem, there are as many objectives as there are users and items.

\paragraph{Nonsmooth convex optimization and differentiable ranking} Our work builds on nonsmooth convex optimization methods \citep{nesterov2005smooth,shamir2013stochastic}, and in particular variants of the Frank-Wolfe algorithm \citep{frank1956algorithm,jaggi2013revisiting} for nonsmooth problems \citep{lan2013complexity,yurtsever2018conditional,ravi2019deterministic,2020thekumparampil}. The recent algorithm of \citep{2020thekumparampil} is a Frank-Wolfe variant which uses the Moreau envelope like us. Its number of first-order calls is optimal, but this is at the cost of a more complex algorithm with inner loops that make it slow in practice. In our case, since the calculation of the gradient is not a bottleneck, we use the simpler algorithm of \citet{lan2013complexity}, which applies Frank-Wolfe to the Moreau envelope of the nonsmooth objective.

Our technical contribution is also related to the literature on differentiable ranking, which includes a large body of work on approximating learning-to-rank metrics \citep{chapelle2010gradient,taylor2008softrank,adams2011ranking}, and recent growing interest in designing smooth ranking modules \citep{grover2019stochastic,cuturi2019differentiable,blondel2020fast} for end-to-end differentiation pipelines. The closest method to ours is the differentiable sorting operator of \citet{blondel2020fast}, which also relies on isotonic regression. The differences between our approaches are explained in Remark \ref{rk:blondel}. 



\section{Conclusion} \label{sec:conclu}

We proposed generalized Gini welfare functions as a flexible method to produce fair rankings. We addressed the challenges of optimizing these welfare functions by leveraging Frank-Wolfe methods for nonsmooth objectives, and demonstrated their efficiency in ranking applications. Our framework and algorithm applies to both usual recommendation of movies or music, and to reciprocal recommendation scenarios, such as dating or hiring.

Generalized Gini welfare functions successfully address a large variety of fairness requirements for ranking algorithms. On the one hand, GGFs are effective in reducing inequalities, since they generalize the Gini index in economics. Optimizing them allows to meet the requirements of equal utility criteria, largely advocated by existing work on fair recommendation \citep{singh2018fairness,basu2020framework,patro2020fairrec,wu2021tfrom}. On the other hand, GGFs effectively increase the utility of the worse-off, which is usually measured by quantile ratios in economics, and has been recently considered as a fairness criterion in ranking \citep{do2021two}.

Our approach is limited to fairness considerations at the stage of inference. It does not address potential biases arising at other parts of the recommendation pipeline, such as in the estimation of preferences. Moreover, we considered a static model, which does not accounts for real-world dynamics, such as responsiveness in two-sided markets \citep{su2021optimizing}, feedback loops in the learning process \citep{bottou2013counterfactual}, and the changing nature of the users' and items' populations \citep{morik2020controlling} and preferences \citep{kalimeris2021preference}. Addressing these limitations, in combination with our method, are interesting directions for future research.

\begin{acks}
We would like to thank Sam Corbett-Davies for his thoughtful feedback on this work.
\end{acks}

\bibliographystyle{ACM-Reference-Format}
\balance
\bibliography{references.bib}


\begin{thebibliography}{85}


\ifx \showCODEN    \undefined \def \showCODEN     #1{\unskip}     \fi
\ifx \showDOI      \undefined \def \showDOI       #1{#1}\fi
\ifx \showISBNx    \undefined \def \showISBNx     #1{\unskip}     \fi
\ifx \showISBNxiii \undefined \def \showISBNxiii  #1{\unskip}     \fi
\ifx \showISSN     \undefined \def \showISSN      #1{\unskip}     \fi
\ifx \showLCCN     \undefined \def \showLCCN      #1{\unskip}     \fi
\ifx \shownote     \undefined \def \shownote      #1{#1}          \fi
\ifx \showarticletitle \undefined \def \showarticletitle #1{#1}   \fi
\ifx \showURL      \undefined \def \showURL       {\relax}        \fi
\providecommand\bibfield[2]{#2}
\providecommand\bibinfo[2]{#2}
\providecommand\natexlab[1]{#1}
\providecommand\showeprint[2][]{arXiv:#2}

\bibitem[Abdollahpouri et~al\mbox{.}(2020)]%
        {abdollahpouri2020multistakeholder}
\bibfield{author}{\bibinfo{person}{Himan Abdollahpouri},
  \bibinfo{person}{Gediminas Adomavicius}, \bibinfo{person}{Robin Burke},
  \bibinfo{person}{Ido Guy}, \bibinfo{person}{Dietmar Jannach},
  \bibinfo{person}{Toshihiro Kamishima}, \bibinfo{person}{Jan Krasnodebski},
  {and} \bibinfo{person}{Luiz Pizzato}.} \bibinfo{year}{2020}\natexlab{}.
\newblock \showarticletitle{Multistakeholder recommendation: Survey and
  research directions}.
\newblock \bibinfo{journal}{\emph{User Modeling and User-Adapted Interaction}}
  \bibinfo{volume}{30}, \bibinfo{number}{1} (\bibinfo{year}{2020}),
  \bibinfo{pages}{127--158}.
\newblock


\bibitem[Abdollahpouri et~al\mbox{.}(2019)]%
        {abdollahpouri2019unfairness}
\bibfield{author}{\bibinfo{person}{Himan Abdollahpouri},
  \bibinfo{person}{Masoud Mansoury}, \bibinfo{person}{Robin Burke}, {and}
  \bibinfo{person}{Bamshad Mobasher}.} \bibinfo{year}{2019}\natexlab{}.
\newblock \showarticletitle{The unfairness of popularity bias in
  recommendation}.
\newblock \bibinfo{journal}{\emph{arXiv preprint arXiv:1907.13286}}
  (\bibinfo{year}{2019}).
\newblock


\bibitem[Adams and Zemel(2011)]%
        {adams2011ranking}
\bibfield{author}{\bibinfo{person}{Ryan~Prescott Adams} {and}
  \bibinfo{person}{Richard~S Zemel}.} \bibinfo{year}{2011}\natexlab{}.
\newblock \showarticletitle{Ranking via sinkhorn propagation}.
\newblock \bibinfo{journal}{\emph{arXiv preprint arXiv:1106.1925}}
  (\bibinfo{year}{2011}).
\newblock


\bibitem[Atkinson(1970)]%
        {atkinson1970measurement}
\bibfield{author}{\bibinfo{person}{Anthony~B Atkinson}.}
  \bibinfo{year}{1970}\natexlab{}.
\newblock \showarticletitle{On the measurement of inequality}.
\newblock \bibinfo{journal}{\emph{Journal of economic theory}}
  \bibinfo{volume}{2}, \bibinfo{number}{3} (\bibinfo{year}{1970}),
  \bibinfo{pages}{244--263}.
\newblock


\bibitem[Basu et~al\mbox{.}(2020)]%
        {basu2020framework}
\bibfield{author}{\bibinfo{person}{Kinjal Basu}, \bibinfo{person}{Cyrus
  DiCiccio}, \bibinfo{person}{Heloise Logan}, {and}
  \bibinfo{person}{Noureddine~El Karoui}.} \bibinfo{year}{2020}\natexlab{}.
\newblock \showarticletitle{A Framework for Fairness in Two-Sided
  Marketplaces}.
\newblock \bibinfo{journal}{\emph{arXiv preprint arXiv:2006.12756}}
  (\bibinfo{year}{2020}).
\newblock


\bibitem[Beck and Teboulle(2012)]%
        {beck2012smoothing}
\bibfield{author}{\bibinfo{person}{Amir Beck} {and} \bibinfo{person}{Marc
  Teboulle}.} \bibinfo{year}{2012}\natexlab{}.
\newblock \showarticletitle{Smoothing and first order methods: A unified
  framework}.
\newblock \bibinfo{journal}{\emph{SIAM Journal on Optimization}}
  \bibinfo{volume}{22}, \bibinfo{number}{2} (\bibinfo{year}{2012}),
  \bibinfo{pages}{557--580}.
\newblock


\bibitem[Best et~al\mbox{.}(2000)]%
        {best2000minimizing}
\bibfield{author}{\bibinfo{person}{Michael~J Best}, \bibinfo{person}{Nilotpal
  Chakravarti}, {and} \bibinfo{person}{Vasant~A Ubhaya}.}
  \bibinfo{year}{2000}\natexlab{}.
\newblock \showarticletitle{Minimizing separable convex functions subject to
  simple chain constraints}.
\newblock \bibinfo{journal}{\emph{SIAM Journal on Optimization}}
  \bibinfo{volume}{10}, \bibinfo{number}{3} (\bibinfo{year}{2000}),
  \bibinfo{pages}{658--672}.
\newblock


\bibitem[Biega et~al\mbox{.}(2018)]%
        {biega2018equity}
\bibfield{author}{\bibinfo{person}{Asia~J Biega}, \bibinfo{person}{Krishna~P
  Gummadi}, {and} \bibinfo{person}{Gerhard Weikum}.}
  \bibinfo{year}{2018}\natexlab{}.
\newblock \showarticletitle{Equity of attention: Amortizing individual fairness
  in rankings}. In \bibinfo{booktitle}{\emph{The 41st international acm sigir
  conference on research \& development in information retrieval}}.
  \bibinfo{pages}{405--414}.
\newblock


\bibitem[Birkhoff(1940)]%
        {birkhoff1940lattice}
\bibfield{author}{\bibinfo{person}{Garrett Birkhoff}.}
  \bibinfo{year}{1940}\natexlab{}.
\newblock \bibinfo{booktitle}{\emph{Lattice theory}}.
  Vol.~\bibinfo{volume}{25}.
\newblock \bibinfo{publisher}{American Mathematical Soc.}
\newblock


\bibitem[Blondel et~al\mbox{.}(2020)]%
        {blondel2020fast}
\bibfield{author}{\bibinfo{person}{Mathieu Blondel}, \bibinfo{person}{Olivier
  Teboul}, \bibinfo{person}{Quentin Berthet}, {and} \bibinfo{person}{Josip
  Djolonga}.} \bibinfo{year}{2020}\natexlab{}.
\newblock \showarticletitle{Fast differentiable sorting and ranking}. In
  \bibinfo{booktitle}{\emph{International Conference on Machine Learning}}.
  PMLR, \bibinfo{pages}{950--959}.
\newblock


\bibitem[Bottou et~al\mbox{.}(2013)]%
        {bottou2013counterfactual}
\bibfield{author}{\bibinfo{person}{L{\'e}on Bottou}, \bibinfo{person}{Jonas
  Peters}, \bibinfo{person}{Joaquin Qui{\~n}onero-Candela},
  \bibinfo{person}{Denis~X Charles}, \bibinfo{person}{D~Max Chickering},
  \bibinfo{person}{Elon Portugaly}, \bibinfo{person}{Dipankar Ray},
  \bibinfo{person}{Patrice Simard}, {and} \bibinfo{person}{Ed Snelson}.}
  \bibinfo{year}{2013}\natexlab{}.
\newblock \showarticletitle{Counterfactual reasoning and learning systems: The
  example of computational advertising}.
\newblock \bibinfo{journal}{\emph{The Journal of Machine Learning Research}}
  \bibinfo{volume}{14}, \bibinfo{number}{1} (\bibinfo{year}{2013}),
  \bibinfo{pages}{3207--3260}.
\newblock


\bibitem[Burke(2017)]%
        {burke2017multisided}
\bibfield{author}{\bibinfo{person}{Robin Burke}.}
  \bibinfo{year}{2017}\natexlab{}.
\newblock \showarticletitle{Multisided fairness for recommendation}.
\newblock \bibinfo{journal}{\emph{arXiv preprint arXiv:1707.00093}}
  (\bibinfo{year}{2017}).
\newblock


\bibitem[Busa-Fekete et~al\mbox{.}(2017)]%
        {busa2017multi}
\bibfield{author}{\bibinfo{person}{R{\'o}bert Busa-Fekete},
  \bibinfo{person}{Bal{\'a}zs Sz{\"o}r{\'e}nyi}, \bibinfo{person}{Paul Weng},
  {and} \bibinfo{person}{Shie Mannor}.} \bibinfo{year}{2017}\natexlab{}.
\newblock \showarticletitle{Multi-objective bandits: Optimizing the generalized
  Gini index}. In \bibinfo{booktitle}{\emph{International Conference on Machine
  Learning}}. PMLR, \bibinfo{pages}{625--634}.
\newblock


\bibitem[Cantador et~al\mbox{.}(2011)]%
        {Cantador:RecSys2011}
\bibfield{author}{\bibinfo{person}{Iv\'{a}n Cantador}, \bibinfo{person}{Peter
  Brusilovsky}, {and} \bibinfo{person}{Tsvi Kuflik}.}
  \bibinfo{year}{2011}\natexlab{}.
\newblock \showarticletitle{2nd Workshop on Information Heterogeneity and
  Fusion in Recommender Systems (HetRec 2011)}. In
  \bibinfo{booktitle}{\emph{Proceedings of the 5th ACM conference on
  Recommender systems}} (Chicago, IL, USA) \emph{(\bibinfo{series}{RecSys
  2011})}. \bibinfo{publisher}{ACM}, \bibinfo{address}{New York, NY, USA}.
\newblock


\bibitem[Chapelle and Wu(2010)]%
        {chapelle2010gradient}
\bibfield{author}{\bibinfo{person}{Olivier Chapelle} {and}
  \bibinfo{person}{Mingrui Wu}.} \bibinfo{year}{2010}\natexlab{}.
\newblock \showarticletitle{Gradient descent optimization of smoothed
  information retrieval metrics}.
\newblock \bibinfo{journal}{\emph{Information retrieval}} \bibinfo{volume}{13},
  \bibinfo{number}{3} (\bibinfo{year}{2010}), \bibinfo{pages}{216--235}.
\newblock


\bibitem[Chen et~al\mbox{.}(2012)]%
        {chen2012smoothing}
\bibfield{author}{\bibinfo{person}{Xi Chen}, \bibinfo{person}{Qihang Lin},
  \bibinfo{person}{Seyoung Kim}, \bibinfo{person}{Jaime~G Carbonell}, {and}
  \bibinfo{person}{Eric~P Xing}.} \bibinfo{year}{2012}\natexlab{}.
\newblock \showarticletitle{Smoothing proximal gradient method for general
  structured sparse learning}.
\newblock \bibinfo{journal}{\emph{arXiv preprint arXiv:1202.3708}}
  (\bibinfo{year}{2012}).
\newblock


\bibitem[Clarkson(2010)]%
        {clarkson2010coresets}
\bibfield{author}{\bibinfo{person}{Kenneth~L Clarkson}.}
  \bibinfo{year}{2010}\natexlab{}.
\newblock \showarticletitle{Coresets, sparse greedy approximation, and the
  Frank-Wolfe algorithm}.
\newblock \bibinfo{journal}{\emph{ACM Transactions on Algorithms (TALG)}}
  \bibinfo{volume}{6}, \bibinfo{number}{4} (\bibinfo{year}{2010}),
  \bibinfo{pages}{1--30}.
\newblock


\bibitem[Cousins(2021)]%
        {cousins2021axiomatic}
\bibfield{author}{\bibinfo{person}{Cyrus Cousins}.}
  \bibinfo{year}{2021}\natexlab{}.
\newblock \showarticletitle{An axiomatic theory of provably-fair
  welfare-centric machine learning}.
\newblock \bibinfo{journal}{\emph{Advances in Neural Information Processing
  Systems}}  \bibinfo{volume}{34} (\bibinfo{year}{2021}).
\newblock


\bibitem[Cowell(1988)]%
        {cowell1988inequality}
\bibfield{author}{\bibinfo{person}{Frank~A Cowell}.}
  \bibinfo{year}{1988}\natexlab{}.
\newblock \showarticletitle{Inequality decomposition: three bad measures}.
\newblock \bibinfo{journal}{\emph{Bulletin of Economic Research}}
  \bibinfo{volume}{40}, \bibinfo{number}{4} (\bibinfo{year}{1988}),
  \bibinfo{pages}{309--312}.
\newblock


\bibitem[Cowell(2000)]%
        {cowell2000measurement}
\bibfield{author}{\bibinfo{person}{Frank~A Cowell}.}
  \bibinfo{year}{2000}\natexlab{}.
\newblock \showarticletitle{Measurement of inequality}.
\newblock \bibinfo{journal}{\emph{Handbook of income distribution}}
  \bibinfo{volume}{1} (\bibinfo{year}{2000}), \bibinfo{pages}{87--166}.
\newblock


\bibitem[Cuturi et~al\mbox{.}(2019)]%
        {cuturi2019differentiable}
\bibfield{author}{\bibinfo{person}{Marco Cuturi}, \bibinfo{person}{Olivier
  Teboul}, {and} \bibinfo{person}{Jean-Philippe Vert}.}
  \bibinfo{year}{2019}\natexlab{}.
\newblock \showarticletitle{Differentiable Ranking and Sorting using Optimal
  Transport}. In \bibinfo{booktitle}{\emph{Advances in Neural Information
  Processing Systems}}. \bibinfo{pages}{6858--6868}.
\newblock


\bibitem[Dalton(1920)]%
        {dalton1920measurement}
\bibfield{author}{\bibinfo{person}{Hugh Dalton}.}
  \bibinfo{year}{1920}\natexlab{}.
\newblock \showarticletitle{The measurement of the inequality of incomes}.
\newblock \bibinfo{journal}{\emph{The Economic Journal}} \bibinfo{volume}{30},
  \bibinfo{number}{119} (\bibinfo{year}{1920}), \bibinfo{pages}{348--361}.
\newblock


\bibitem[Datta et~al\mbox{.}(2015)]%
        {datta2015automated}
\bibfield{author}{\bibinfo{person}{Amit Datta}, \bibinfo{person}{Michael~Carl
  Tschantz}, {and} \bibinfo{person}{Anupam Datta}.}
  \bibinfo{year}{2015}\natexlab{}.
\newblock \showarticletitle{Automated experiments on ad privacy settings}.
\newblock \bibinfo{journal}{\emph{Proceedings on Privacy Enhancing
  Technologies}} \bibinfo{volume}{2015}, \bibinfo{number}{1}
  (\bibinfo{year}{2015}), \bibinfo{pages}{92--112}.
\newblock


\bibitem[De~Domenico et~al\mbox{.}(2013)]%
        {de2013anatomy}
\bibfield{author}{\bibinfo{person}{Manlio De~Domenico},
  \bibinfo{person}{Antonio Lima}, \bibinfo{person}{Paul Mougel}, {and}
  \bibinfo{person}{Mirco Musolesi}.} \bibinfo{year}{2013}\natexlab{}.
\newblock \showarticletitle{The anatomy of a scientific rumor}.
\newblock \bibinfo{journal}{\emph{Scientific reports}} \bibinfo{volume}{3},
  \bibinfo{number}{1} (\bibinfo{year}{2013}), \bibinfo{pages}{1--9}.
\newblock


\bibitem[Deldjoo et~al\mbox{.}(2021)]%
        {deldjoo2021flexible}
\bibfield{author}{\bibinfo{person}{Yashar Deldjoo},
  \bibinfo{person}{Vito~Walter Anelli}, \bibinfo{person}{Hamed Zamani},
  \bibinfo{person}{Alejandro Bellogin}, {and} \bibinfo{person}{Tommaso
  Di~Noia}.} \bibinfo{year}{2021}\natexlab{}.
\newblock \showarticletitle{A flexible framework for evaluating user and item
  fairness in recommender systems}.
\newblock \bibinfo{journal}{\emph{User Modeling and User-Adapted Interaction}}
  (\bibinfo{year}{2021}), \bibinfo{pages}{1--55}.
\newblock


\bibitem[Diaz et~al\mbox{.}(2020)]%
        {diaz2020evaluating}
\bibfield{author}{\bibinfo{person}{Fernando Diaz}, \bibinfo{person}{Bhaskar
  Mitra}, \bibinfo{person}{Michael~D Ekstrand}, \bibinfo{person}{Asia~J Biega},
  {and} \bibinfo{person}{Ben Carterette}.} \bibinfo{year}{2020}\natexlab{}.
\newblock \showarticletitle{Evaluating stochastic rankings with expected
  exposure}. In \bibinfo{booktitle}{\emph{Proceedings of the 29th ACM
  International Conference on Information \& Knowledge Management}}.
  \bibinfo{pages}{275--284}.
\newblock


\bibitem[Do et~al\mbox{.}(2021)]%
        {do2021two}
\bibfield{author}{\bibinfo{person}{Virginie Do}, \bibinfo{person}{Sam
  Corbett-Davies}, \bibinfo{person}{Jamal Atif}, {and} \bibinfo{person}{Nicolas
  Usunier}.} \bibinfo{year}{2021}\natexlab{}.
\newblock \showarticletitle{Two-sided fairness in rankings via Lorenz
  dominance}.
\newblock \bibinfo{journal}{\emph{Advances in Neural Information Processing
  Systems}}  \bibinfo{volume}{34} (\bibinfo{year}{2021}).
\newblock


\bibitem[Do et~al\mbox{.}(2022)]%
        {DBLP:journals/corr/abs-2104-14527}
\bibfield{author}{\bibinfo{person}{Virginie Do}, \bibinfo{person}{Sam
  Corbett-Davies}, \bibinfo{person}{Jamal Atif}, {and} \bibinfo{person}{Nicolas
  Usunier}.} \bibinfo{year}{2022}\natexlab{}.
\newblock \showarticletitle{Online certification of preference-based fairness
  for personalized recommender systems}. In
  \bibinfo{booktitle}{\emph{Proceedings of the AAAI Conference on Artificial
  Intelligence}}, Vol.~\bibinfo{volume}{36}.
\newblock


\bibitem[Ekstrand et~al\mbox{.}(2018)]%
        {ekstrand2018all}
\bibfield{author}{\bibinfo{person}{Michael~D Ekstrand}, \bibinfo{person}{Mucun
  Tian}, \bibinfo{person}{Ion~Madrazo Azpiazu}, \bibinfo{person}{Jennifer~D
  Ekstrand}, \bibinfo{person}{Oghenemaro Anuyah}, \bibinfo{person}{David
  McNeill}, {and} \bibinfo{person}{Maria~Soledad Pera}.}
  \bibinfo{year}{2018}\natexlab{}.
\newblock \showarticletitle{All the cool kids, how do they fit in?: Popularity
  and demographic biases in recommender evaluation and effectiveness}. In
  \bibinfo{booktitle}{\emph{Conference on Fairness, Accountability and
  Transparency}}. PMLR, \bibinfo{pages}{172--186}.
\newblock


\bibitem[Finocchiaro et~al\mbox{.}(2020)]%
        {finocchiaro2020bridging}
\bibfield{author}{\bibinfo{person}{Jessie Finocchiaro}, \bibinfo{person}{Roland
  Maio}, \bibinfo{person}{Faidra Monachou}, \bibinfo{person}{Gourab~K Patro},
  \bibinfo{person}{Manish Raghavan}, \bibinfo{person}{Ana-Andreea Stoica},
  {and} \bibinfo{person}{Stratis Tsirtsis}.} \bibinfo{year}{2020}\natexlab{}.
\newblock \showarticletitle{Bridging Machine Learning and Mechanism Design
  towards Algorithmic Fairness}.
\newblock \bibinfo{journal}{\emph{arXiv preprint arXiv:2010.05434}}
  (\bibinfo{year}{2020}).
\newblock


\bibitem[Frank and Wolfe(1956)]%
        {frank1956algorithm}
\bibfield{author}{\bibinfo{person}{Marguerite Frank} {and}
  \bibinfo{person}{Philip Wolfe}.} \bibinfo{year}{1956}\natexlab{}.
\newblock \showarticletitle{An algorithm for quadratic programming}.
\newblock \bibinfo{journal}{\emph{Naval research logistics quarterly}}
  \bibinfo{volume}{3}, \bibinfo{number}{1-2} (\bibinfo{year}{1956}),
  \bibinfo{pages}{95--110}.
\newblock


\bibitem[Geoffrion(1968)]%
        {geoffrion1968proper}
\bibfield{author}{\bibinfo{person}{Arthur~M Geoffrion}.}
  \bibinfo{year}{1968}\natexlab{}.
\newblock \showarticletitle{Proper efficiency and the theory of vector
  maximization}.
\newblock \bibinfo{journal}{\emph{Journal of mathematical analysis and
  applications}} \bibinfo{volume}{22}, \bibinfo{number}{3}
  (\bibinfo{year}{1968}), \bibinfo{pages}{618--630}.
\newblock


\bibitem[Gini(1921)]%
        {gini1921measurement}
\bibfield{author}{\bibinfo{person}{Corrado Gini}.}
  \bibinfo{year}{1921}\natexlab{}.
\newblock \showarticletitle{Measurement of inequality of incomes}.
\newblock \bibinfo{journal}{\emph{The economic journal}} \bibinfo{volume}{31},
  \bibinfo{number}{121} (\bibinfo{year}{1921}), \bibinfo{pages}{124--126}.
\newblock


\bibitem[G{\"o}lz et~al\mbox{.}(2019)]%
        {golz2019paradoxes}
\bibfield{author}{\bibinfo{person}{Paul G{\"o}lz}, \bibinfo{person}{Anson
  Kahng}, {and} \bibinfo{person}{Ariel~D Procaccia}.}
  \bibinfo{year}{2019}\natexlab{}.
\newblock \showarticletitle{Paradoxes in Fair Machine Learning}.
\newblock \bibinfo{journal}{\emph{Advances in Neural Information Processing
  Systems (NeurIPS). Forthcoming}} (\bibinfo{year}{2019}).
\newblock


\bibitem[Grover et~al\mbox{.}(2019)]%
        {grover2019stochastic}
\bibfield{author}{\bibinfo{person}{Aditya Grover}, \bibinfo{person}{Eric Wang},
  \bibinfo{person}{Aaron Zweig}, {and} \bibinfo{person}{Stefano Ermon}.}
  \bibinfo{year}{2019}\natexlab{}.
\newblock \showarticletitle{Stochastic optimization of sorting networks via
  continuous relaxations}.
\newblock \bibinfo{journal}{\emph{arXiv preprint arXiv:1903.08850}}
  (\bibinfo{year}{2019}).
\newblock


\bibitem[{Hardy} et~al\mbox{.}(1952)]%
        {hardy1952inequalities}
\bibfield{author}{\bibinfo{person}{G.~H. {Hardy}}, \bibinfo{person}{J.~E.
  {Littlewood}}, {and} \bibinfo{person}{George {P\'olya}}.}
  \bibinfo{year}{1952}\natexlab{}.
\newblock \bibinfo{title}{{Inequalities. 2nd ed}}.
\newblock \bibinfo{howpublished}{{Cambridge, Engl.: At the University Press.
  XII, 324 p. (1952).}}.
\newblock


\bibitem[Harper and Konstan(2015)]%
        {harper2015movielens}
\bibfield{author}{\bibinfo{person}{F~Maxwell Harper} {and}
  \bibinfo{person}{Joseph~A Konstan}.} \bibinfo{year}{2015}\natexlab{}.
\newblock \showarticletitle{The movielens datasets: History and context}.
\newblock \bibinfo{journal}{\emph{Acm transactions on interactive intelligent
  systems (tiis)}} \bibinfo{volume}{5}, \bibinfo{number}{4}
  (\bibinfo{year}{2015}), \bibinfo{pages}{1--19}.
\newblock


\bibitem[Heidari et~al\mbox{.}(2018)]%
        {heidari2018fairness}
\bibfield{author}{\bibinfo{person}{Hoda Heidari}, \bibinfo{person}{Claudio
  Ferrari}, \bibinfo{person}{Krishna Gummadi}, {and} \bibinfo{person}{Andreas
  Krause}.} \bibinfo{year}{2018}\natexlab{}.
\newblock \showarticletitle{Fairness behind a veil of ignorance: A welfare
  analysis for automated decision making}. In
  \bibinfo{booktitle}{\emph{Advances in Neural Information Processing
  Systems}}. \bibinfo{pages}{1265--1276}.
\newblock


\bibitem[Hu and Chen(2020)]%
        {hu2020fair}
\bibfield{author}{\bibinfo{person}{Lily Hu} {and} \bibinfo{person}{Yiling
  Chen}.} \bibinfo{year}{2020}\natexlab{}.
\newblock \showarticletitle{Fair classification and social welfare}. In
  \bibinfo{booktitle}{\emph{Proceedings of the 2020 Conference on Fairness,
  Accountability, and Transparency}}. \bibinfo{pages}{535--545}.
\newblock


\bibitem[Jaggi(2013)]%
        {jaggi2013revisiting}
\bibfield{author}{\bibinfo{person}{Martin Jaggi}.}
  \bibinfo{year}{2013}\natexlab{}.
\newblock \showarticletitle{Revisiting Frank-Wolfe: Projection-free sparse
  convex optimization}. In \bibinfo{booktitle}{\emph{International Conference
  on Machine Learning}}. PMLR, \bibinfo{pages}{427--435}.
\newblock


\bibitem[Jia et~al\mbox{.}(2018)]%
        {jia2018online}
\bibfield{author}{\bibinfo{person}{Yongzheng Jia}, \bibinfo{person}{Xue Liu},
  {and} \bibinfo{person}{Wei Xu}.} \bibinfo{year}{2018}\natexlab{}.
\newblock \showarticletitle{When Online Dating Meets Nash Social Welfare:
  Achieving Efficiency and Fairness}. In \bibinfo{booktitle}{\emph{Proceedings
  of the 2018 World Wide Web Conference}}. \bibinfo{pages}{429--438}.
\newblock


\bibitem[Kalimeris et~al\mbox{.}(2021)]%
        {kalimeris2021preference}
\bibfield{author}{\bibinfo{person}{Dimitris Kalimeris}, \bibinfo{person}{Smriti
  Bhagat}, \bibinfo{person}{Shankar Kalyanaraman}, {and} \bibinfo{person}{Udi
  Weinsberg}.} \bibinfo{year}{2021}\natexlab{}.
\newblock \showarticletitle{Preference Amplification in Recommender Systems}.
  In \bibinfo{booktitle}{\emph{Proceedings of the 27th ACM SIGKDD Conference on
  Knowledge Discovery \& Data Mining}}. \bibinfo{pages}{805--815}.
\newblock


\bibitem[Kletti et~al\mbox{.}(2022)]%
        {kletti2022introducing}
\bibfield{author}{\bibinfo{person}{Till Kletti}, \bibinfo{person}{Jean-Michel
  Renders}, {and} \bibinfo{person}{Patrick Loiseau}.}
  \bibinfo{year}{2022}\natexlab{}.
\newblock \showarticletitle{Introducing the Expohedron for Efficient
  Pareto-optimal Fairness-Utility Amortizations in Repeated Rankings}. In
  \bibinfo{booktitle}{\emph{Proceedings of the Fifteenth ACM International
  Conference on Web Search and Data Mining}}. \bibinfo{pages}{498--507}.
\newblock


\bibitem[Lan(2013)]%
        {lan2013complexity}
\bibfield{author}{\bibinfo{person}{Guanghui Lan}.}
  \bibinfo{year}{2013}\natexlab{}.
\newblock \showarticletitle{The complexity of large-scale convex programming
  under a linear optimization oracle}.
\newblock \bibinfo{journal}{\emph{arXiv preprint arXiv:1309.5550}}
  (\bibinfo{year}{2013}).
\newblock


\bibitem[Lazovich et~al\mbox{.}(2022)]%
        {lazovich2022measuring}
\bibfield{author}{\bibinfo{person}{Tomo Lazovich}, \bibinfo{person}{Luca
  Belli}, \bibinfo{person}{Aaron Gonzales}, \bibinfo{person}{Amanda Bower},
  \bibinfo{person}{Uthaipon Tantipongpipat}, \bibinfo{person}{Kristian Lum},
  \bibinfo{person}{Ferenc Huszar}, {and} \bibinfo{person}{Rumman Chowdhury}.}
  \bibinfo{year}{2022}\natexlab{}.
\newblock \showarticletitle{Measuring Disparate Outcomes of Content
  Recommendation Algorithms with Distributional Inequality Metrics}.
\newblock \bibinfo{journal}{\emph{arXiv preprint arXiv:2202.01615}}
  (\bibinfo{year}{2022}).
\newblock


\bibitem[Lim and Wright(2016)]%
        {lim2016efficient}
\bibfield{author}{\bibinfo{person}{Cong~Han Lim} {and}
  \bibinfo{person}{Stephen~J Wright}.} \bibinfo{year}{2016}\natexlab{}.
\newblock \showarticletitle{Efficient bregman projections onto the
  permutahedron and related polytopes}. In \bibinfo{booktitle}{\emph{Artificial
  Intelligence and Statistics}}. PMLR, \bibinfo{pages}{1205--1213}.
\newblock


\bibitem[Mehrotra et~al\mbox{.}(2017)]%
        {mehrotra2017auditing}
\bibfield{author}{\bibinfo{person}{Rishabh Mehrotra}, \bibinfo{person}{Ashton
  Anderson}, \bibinfo{person}{Fernando Diaz}, \bibinfo{person}{Amit Sharma},
  \bibinfo{person}{Hanna Wallach}, {and} \bibinfo{person}{Emine Yilmaz}.}
  \bibinfo{year}{2017}\natexlab{}.
\newblock \showarticletitle{Auditing search engines for differential
  satisfaction across demographics}. In \bibinfo{booktitle}{\emph{Proceedings
  of the 26th international conference on World Wide Web companion}}.
  \bibinfo{pages}{626--633}.
\newblock


\bibitem[Mehrotra et~al\mbox{.}(2020)]%
        {mehrotra2020bandit}
\bibfield{author}{\bibinfo{person}{Rishabh Mehrotra}, \bibinfo{person}{Niannan
  Xue}, {and} \bibinfo{person}{Mounia Lalmas}.}
  \bibinfo{year}{2020}\natexlab{}.
\newblock \showarticletitle{Bandit based Optimization of Multiple Objectives on
  a Music Streaming Platform}. In \bibinfo{booktitle}{\emph{Proceedings of the
  26th ACM SIGKDD International Conference on Knowledge Discovery \& Data
  Mining}}. \bibinfo{pages}{3224--3233}.
\newblock


\bibitem[Miettinen(2012)]%
        {miettinen2012nonlinear}
\bibfield{author}{\bibinfo{person}{Kaisa Miettinen}.}
  \bibinfo{year}{2012}\natexlab{}.
\newblock \bibinfo{booktitle}{\emph{Nonlinear multiobjective optimization}}.
  Vol.~\bibinfo{volume}{12}.
\newblock \bibinfo{publisher}{Springer Science \& Business Media}.
\newblock


\bibitem[Moreau(1962)]%
        {moreau1962fonctions}
\bibfield{author}{\bibinfo{person}{Jean~Jacques Moreau}.}
  \bibinfo{year}{1962}\natexlab{}.
\newblock \showarticletitle{Fonctions convexes duales et points proximaux dans
  un espace hilbertien}.
\newblock \bibinfo{journal}{\emph{Comptes rendus hebdomadaires des s{\'e}ances
  de l'Acad{\'e}mie des sciences}}  \bibinfo{volume}{255}
  (\bibinfo{year}{1962}), \bibinfo{pages}{2897--2899}.
\newblock


\bibitem[Morik et~al\mbox{.}(2020)]%
        {morik2020controlling}
\bibfield{author}{\bibinfo{person}{Marco Morik}, \bibinfo{person}{Ashudeep
  Singh}, \bibinfo{person}{Jessica Hong}, {and} \bibinfo{person}{Thorsten
  Joachims}.} \bibinfo{year}{2020}\natexlab{}.
\newblock \bibinfo{booktitle}{\emph{Controlling Fairness and Bias in Dynamic
  Learning-to-Rank}}.
\newblock \bibinfo{publisher}{Association for Computing Machinery},
  \bibinfo{address}{New York, NY, USA}, \bibinfo{pages}{429–438}.
\newblock
\showISBNx{9781450380164}
\urldef\tempurl%
\url{https://doi.org/10.1145/3397271.3401100}
\showURL{%
\tempurl}


\bibitem[Moulin(2003)]%
        {moulin2003fair}
\bibfield{author}{\bibinfo{person}{Herv{\'e} Moulin}.}
  \bibinfo{year}{2003}\natexlab{}.
\newblock \bibinfo{booktitle}{\emph{Fair division and collective welfare}}.
\newblock \bibinfo{publisher}{MIT press}.
\newblock


\bibitem[Negrinho and Martins(2014)]%
        {negrinho2014orbit}
\bibfield{author}{\bibinfo{person}{Renato Negrinho} {and}
  \bibinfo{person}{Andre Martins}.} \bibinfo{year}{2014}\natexlab{}.
\newblock \showarticletitle{Orbit regularization}.
\newblock \bibinfo{journal}{\emph{Advances in neural information processing
  systems}}  \bibinfo{volume}{27} (\bibinfo{year}{2014}),
  \bibinfo{pages}{3221--3229}.
\newblock


\bibitem[Nesterov(2005)]%
        {nesterov2005smooth}
\bibfield{author}{\bibinfo{person}{Yu Nesterov}.}
  \bibinfo{year}{2005}\natexlab{}.
\newblock \showarticletitle{Smooth minimization of non-smooth functions}.
\newblock \bibinfo{journal}{\emph{Mathematical programming}}
  \bibinfo{volume}{103}, \bibinfo{number}{1} (\bibinfo{year}{2005}),
  \bibinfo{pages}{127--152}.
\newblock


\bibitem[Nesterov(2018)]%
        {nesterov2018complexity}
\bibfield{author}{\bibinfo{person}{Yu Nesterov}.}
  \bibinfo{year}{2018}\natexlab{}.
\newblock \showarticletitle{Complexity bounds for primal-dual methods
  minimizing the model of objective function}.
\newblock \bibinfo{journal}{\emph{Mathematical Programming}}
  \bibinfo{volume}{171}, \bibinfo{number}{1} (\bibinfo{year}{2018}),
  \bibinfo{pages}{311--330}.
\newblock


\bibitem[Ogryczak and {\'S}liwi{\'n}ski(2003)]%
        {ogryczak2003solving}
\bibfield{author}{\bibinfo{person}{W{\l}odzimierz Ogryczak} {and}
  \bibinfo{person}{Tomasz {\'S}liwi{\'n}ski}.} \bibinfo{year}{2003}\natexlab{}.
\newblock \showarticletitle{On solving linear programs with the ordered
  weighted averaging objective}.
\newblock \bibinfo{journal}{\emph{European Journal of Operational Research}}
  \bibinfo{volume}{148}, \bibinfo{number}{1} (\bibinfo{year}{2003}),
  \bibinfo{pages}{80--91}.
\newblock


\bibitem[Palomares et~al\mbox{.}(2021)]%
        {palomares2021reciprocal}
\bibfield{author}{\bibinfo{person}{Iv{\'a}n Palomares}, \bibinfo{person}{Carlos
  Porcel}, \bibinfo{person}{Luiz Pizzato}, \bibinfo{person}{Ido Guy}, {and}
  \bibinfo{person}{Enrique Herrera-Viedma}.} \bibinfo{year}{2021}\natexlab{}.
\newblock \showarticletitle{Reciprocal Recommender Systems: Analysis of
  state-of-art literature, challenges and opportunities towards social
  recommendation}.
\newblock \bibinfo{journal}{\emph{Information Fusion}}  \bibinfo{volume}{69}
  (\bibinfo{year}{2021}), \bibinfo{pages}{103--127}.
\newblock


\bibitem[Paraschakis and Nilsson(2020)]%
        {paraschakis2020matchmaking}
\bibfield{author}{\bibinfo{person}{Dimitris Paraschakis} {and}
  \bibinfo{person}{Bengt~J Nilsson}.} \bibinfo{year}{2020}\natexlab{}.
\newblock \showarticletitle{Matchmaking under fairness constraints: a speed
  dating case study}. In \bibinfo{booktitle}{\emph{International Workshop on
  Algorithmic Bias in Search and Recommendation}}. Springer,
  \bibinfo{pages}{43--57}.
\newblock


\bibitem[Parikh and Boyd(2014)]%
        {parikh2014proximal}
\bibfield{author}{\bibinfo{person}{Neal Parikh} {and} \bibinfo{person}{Stephen
  Boyd}.} \bibinfo{year}{2014}\natexlab{}.
\newblock \showarticletitle{Proximal algorithms}.
\newblock \bibinfo{journal}{\emph{Foundations and Trends in optimization}}
  \bibinfo{volume}{1}, \bibinfo{number}{3} (\bibinfo{year}{2014}),
  \bibinfo{pages}{127--239}.
\newblock


\bibitem[Patro et~al\mbox{.}(2020)]%
        {patro2020fairrec}
\bibfield{author}{\bibinfo{person}{Gourab~K Patro}, \bibinfo{person}{Arpita
  Biswas}, \bibinfo{person}{Niloy Ganguly}, \bibinfo{person}{Krishna~P.
  Gummadi}, {and} \bibinfo{person}{Abhijnan Chakraborty}.}
  \bibinfo{year}{2020}\natexlab{}.
\newblock \bibinfo{booktitle}{\emph{FairRec: Two-Sided Fairness for
  Personalized Recommendations in Two-Sided Platforms}}.
\newblock \bibinfo{publisher}{Association for Computing Machinery},
  \bibinfo{address}{New York, NY, USA}, \bibinfo{pages}{1194–1204}.
\newblock
\showISBNx{9781450370233}
\urldef\tempurl%
\url{https://doi.org/10.1145/3366423.3380196}
\showURL{%
\tempurl}


\bibitem[Rambachan et~al\mbox{.}(2020)]%
        {rambachan2020economic}
\bibfield{author}{\bibinfo{person}{Ashesh Rambachan}, \bibinfo{person}{Jon
  Kleinberg}, \bibinfo{person}{Sendhil Mullainathan}, {and}
  \bibinfo{person}{Jens Ludwig}.} \bibinfo{year}{2020}\natexlab{}.
\newblock \bibinfo{booktitle}{\emph{An economic approach to regulating
  algorithms}}.
\newblock \bibinfo{type}{{T}echnical {R}eport}. \bibinfo{institution}{National
  Bureau of Economic Research}.
\newblock


\bibitem[Ravi et~al\mbox{.}(2019)]%
        {ravi2019deterministic}
\bibfield{author}{\bibinfo{person}{Sathya~N Ravi}, \bibinfo{person}{Maxwell~D
  Collins}, {and} \bibinfo{person}{Vikas Singh}.}
  \bibinfo{year}{2019}\natexlab{}.
\newblock \showarticletitle{A deterministic nonsmooth frank wolfe algorithm
  with coreset guarantees}.
\newblock \bibinfo{journal}{\emph{Informs Journal on Optimization}}
  \bibinfo{volume}{1}, \bibinfo{number}{2} (\bibinfo{year}{2019}),
  \bibinfo{pages}{120--142}.
\newblock


\bibitem[Shamir and Zhang(2013)]%
        {shamir2013stochastic}
\bibfield{author}{\bibinfo{person}{Ohad Shamir} {and} \bibinfo{person}{Tong
  Zhang}.} \bibinfo{year}{2013}\natexlab{}.
\newblock \showarticletitle{Stochastic gradient descent for non-smooth
  optimization: Convergence results and optimal averaging schemes}. In
  \bibinfo{booktitle}{\emph{International conference on machine learning}}.
  PMLR, \bibinfo{pages}{71--79}.
\newblock


\bibitem[Shorrocks(1983)]%
        {shorrocks1983ranking}
\bibfield{author}{\bibinfo{person}{Anthony~F Shorrocks}.}
  \bibinfo{year}{1983}\natexlab{}.
\newblock \showarticletitle{Ranking income distributions}.
\newblock \bibinfo{journal}{\emph{Economica}} \bibinfo{volume}{50},
  \bibinfo{number}{197} (\bibinfo{year}{1983}), \bibinfo{pages}{3--17}.
\newblock


\bibitem[Siddique et~al\mbox{.}(2020)]%
        {siddique2020learning}
\bibfield{author}{\bibinfo{person}{Umer Siddique}, \bibinfo{person}{Paul Weng},
  {and} \bibinfo{person}{Matthieu Zimmer}.} \bibinfo{year}{2020}\natexlab{}.
\newblock \showarticletitle{Learning Fair Policies in Multi-Objective (Deep)
  Reinforcement Learning with Average and Discounted Rewards}. In
  \bibinfo{booktitle}{\emph{International Conference on Machine Learning}}.
  PMLR, \bibinfo{pages}{8905--8915}.
\newblock


\bibitem[Singh and Joachims(2018)]%
        {singh2018fairness}
\bibfield{author}{\bibinfo{person}{Ashudeep Singh} {and}
  \bibinfo{person}{Thorsten Joachims}.} \bibinfo{year}{2018}\natexlab{}.
\newblock \showarticletitle{Fairness of exposure in rankings}. In
  \bibinfo{booktitle}{\emph{Proceedings of the 24th ACM SIGKDD International
  Conference on Knowledge Discovery \& Data Mining}}. ACM,
  \bibinfo{pages}{2219--2228}.
\newblock


\bibitem[Speicher et~al\mbox{.}(2018)]%
        {speicher2018unified}
\bibfield{author}{\bibinfo{person}{Till Speicher}, \bibinfo{person}{Hoda
  Heidari}, \bibinfo{person}{Nina Grgic-Hlaca}, \bibinfo{person}{Krishna~P
  Gummadi}, \bibinfo{person}{Adish Singla}, \bibinfo{person}{Adrian Weller},
  {and} \bibinfo{person}{Muhammad~Bilal Zafar}.}
  \bibinfo{year}{2018}\natexlab{}.
\newblock \showarticletitle{A Unified Approach to Quantifying Algorithmic
  Unfairness: Measuring Individual \&Group Unfairness via Inequality Indices}.
  In \bibinfo{booktitle}{\emph{Proceedings of the 24th ACM SIGKDD International
  Conference on Knowledge Discovery \& Data Mining}}. ACM,
  \bibinfo{pages}{2239--2248}.
\newblock


\bibitem[Su et~al\mbox{.}(2021)]%
        {su2021optimizing}
\bibfield{author}{\bibinfo{person}{Yi Su}, \bibinfo{person}{Magd Bayoumi},
  {and} \bibinfo{person}{Thorsten Joachims}.} \bibinfo{year}{2021}\natexlab{}.
\newblock \showarticletitle{Optimizing Rankings for Recommendation in Matching
  Markets}.
\newblock \bibinfo{journal}{\emph{arXiv preprint arXiv:2106.01941}}
  (\bibinfo{year}{2021}).
\newblock


\bibitem[Sweeney(2013)]%
        {sweeney2013discrimination}
\bibfield{author}{\bibinfo{person}{Latanya Sweeney}.}
  \bibinfo{year}{2013}\natexlab{}.
\newblock \showarticletitle{Discrimination in online ad delivery}.
\newblock \bibinfo{journal}{\emph{Queue}} \bibinfo{volume}{11},
  \bibinfo{number}{3} (\bibinfo{year}{2013}), \bibinfo{pages}{10}.
\newblock


\bibitem[Taylor et~al\mbox{.}(2008)]%
        {taylor2008softrank}
\bibfield{author}{\bibinfo{person}{Michael Taylor}, \bibinfo{person}{John
  Guiver}, \bibinfo{person}{Stephen Robertson}, {and} \bibinfo{person}{Tom
  Minka}.} \bibinfo{year}{2008}\natexlab{}.
\newblock \showarticletitle{Softrank: optimizing non-smooth rank metrics}. In
  \bibinfo{booktitle}{\emph{Proceedings of the 2008 International Conference on
  Web Search and Data Mining}}. \bibinfo{pages}{77--86}.
\newblock


\bibitem[Thekumparampil et~al\mbox{.}(2020)]%
        {2020thekumparampil}
\bibfield{author}{\bibinfo{person}{Kiran~K Thekumparampil},
  \bibinfo{person}{Prateek Jain}, \bibinfo{person}{Praneeth Netrapalli}, {and}
  \bibinfo{person}{Sewoong Oh}.} \bibinfo{year}{2020}\natexlab{}.
\newblock \showarticletitle{Projection Efficient Subgradient Method and Optimal
  Nonsmooth Frank-Wolfe Method}. In \bibinfo{booktitle}{\emph{Advances in
  Neural Information Processing Systems}},
  \bibfield{editor}{\bibinfo{person}{H.~Larochelle},
  \bibinfo{person}{M.~Ranzato}, \bibinfo{person}{R.~Hadsell},
  \bibinfo{person}{M.~F. Balcan}, {and} \bibinfo{person}{H.~Lin}} (Eds.),
  Vol.~\bibinfo{volume}{33}. \bibinfo{publisher}{Curran Associates, Inc.},
  \bibinfo{pages}{12211--12224}.
\newblock
\urldef\tempurl%
\url{https://proceedings.neurips.cc/paper/2020/file/8f468c873a32bb0619eaeb2050ba45d1-Paper.pdf}
\showURL{%
\tempurl}


\bibitem[Wang and Joachims(2021)]%
        {wang2021user}
\bibfield{author}{\bibinfo{person}{Lequn Wang} {and} \bibinfo{person}{Thorsten
  Joachims}.} \bibinfo{year}{2021}\natexlab{}.
\newblock \showarticletitle{User Fairness, Item Fairness, and Diversity for
  Rankings in Two-Sided Markets}. In \bibinfo{booktitle}{\emph{Proceedings of
  the 2021 ACM SIGIR International Conference on Theory of Information
  Retrieval}}. \bibinfo{pages}{23--41}.
\newblock


\bibitem[Weymark(1981)]%
        {weymark1981generalized}
\bibfield{author}{\bibinfo{person}{John~A Weymark}.}
  \bibinfo{year}{1981}\natexlab{}.
\newblock \showarticletitle{Generalized Gini inequality indices}.
\newblock \bibinfo{journal}{\emph{Mathematical Social Sciences}}
  \bibinfo{volume}{1}, \bibinfo{number}{4} (\bibinfo{year}{1981}),
  \bibinfo{pages}{409--430}.
\newblock


\bibitem[Wilkie and Azzopardi(2014)]%
        {wilkie2014best}
\bibfield{author}{\bibinfo{person}{Colin Wilkie} {and} \bibinfo{person}{Leif
  Azzopardi}.} \bibinfo{year}{2014}\natexlab{}.
\newblock \showarticletitle{Best and fairest: An empirical analysis of
  retrieval system bias}. In \bibinfo{booktitle}{\emph{European Conference on
  Information Retrieval}}. Springer, \bibinfo{pages}{13--25}.
\newblock


\bibitem[Wilkie and Azzopardi(2015)]%
        {wilkie2015retrievability}
\bibfield{author}{\bibinfo{person}{Colin Wilkie} {and} \bibinfo{person}{Leif
  Azzopardi}.} \bibinfo{year}{2015}\natexlab{}.
\newblock \showarticletitle{Retrievability and retrieval bias: A comparison of
  inequality measures}. In \bibinfo{booktitle}{\emph{European Conference on
  Information Retrieval}}. Springer, \bibinfo{pages}{209--214}.
\newblock


\bibitem[Williamson and Menon(2019)]%
        {williamson2019fairness}
\bibfield{author}{\bibinfo{person}{Robert Williamson} {and}
  \bibinfo{person}{Aditya Menon}.} \bibinfo{year}{2019}\natexlab{}.
\newblock \showarticletitle{Fairness risk measures}. In
  \bibinfo{booktitle}{\emph{International Conference on Machine Learning}}.
  PMLR, \bibinfo{pages}{6786--6797}.
\newblock


\bibitem[Wu et~al\mbox{.}(2021b)]%
        {wu2021multi}
\bibfield{author}{\bibinfo{person}{Haolun Wu}, \bibinfo{person}{Chen Ma},
  \bibinfo{person}{Bhaskar Mitra}, \bibinfo{person}{Fernando Diaz}, {and}
  \bibinfo{person}{Xue Liu}.} \bibinfo{year}{2021}\natexlab{b}.
\newblock \showarticletitle{Multi-FR: A Multi-Objective Optimization Method for
  Achieving Two-sided Fairness in E-commerce Recommendation}.
\newblock \bibinfo{journal}{\emph{arXiv preprint arXiv:2105.02951}}
  (\bibinfo{year}{2021}).
\newblock


\bibitem[Wu et~al\mbox{.}(2021a)]%
        {wu2021tfrom}
\bibfield{author}{\bibinfo{person}{Yao Wu}, \bibinfo{person}{Jian Cao},
  \bibinfo{person}{Guandong Xu}, {and} \bibinfo{person}{Yudong Tan}.}
  \bibinfo{year}{2021}\natexlab{a}.
\newblock \showarticletitle{TFROM: A Two-sided Fairness-Aware Recommendation
  Model for Both Customers and Providers}.
\newblock \bibinfo{journal}{\emph{arXiv preprint arXiv:2104.09024}}
  (\bibinfo{year}{2021}).
\newblock


\bibitem[Xia et~al\mbox{.}(2015)]%
        {xia2015reciprocal}
\bibfield{author}{\bibinfo{person}{Peng Xia}, \bibinfo{person}{Benyuan Liu},
  \bibinfo{person}{Yizhou Sun}, {and} \bibinfo{person}{Cindy Chen}.}
  \bibinfo{year}{2015}\natexlab{}.
\newblock \showarticletitle{Reciprocal recommendation system for online
  dating}. In \bibinfo{booktitle}{\emph{2015 IEEE/ACM International Conference
  on Advances in Social Networks Analysis and Mining (ASONAM)}}. IEEE,
  \bibinfo{pages}{234--241}.
\newblock


\bibitem[Yager(1988)]%
        {yager1988ordered}
\bibfield{author}{\bibinfo{person}{Ronald~R Yager}.}
  \bibinfo{year}{1988}\natexlab{}.
\newblock \showarticletitle{On ordered weighted averaging aggregation operators
  in multicriteria decisionmaking}.
\newblock \bibinfo{journal}{\emph{IEEE Transactions on systems, Man, and
  Cybernetics}} \bibinfo{volume}{18}, \bibinfo{number}{1}
  (\bibinfo{year}{1988}), \bibinfo{pages}{183--190}.
\newblock


\bibitem[Yitzhaki and Schechtman(2013)]%
        {yitzhaki2013more}
\bibfield{author}{\bibinfo{person}{Shlomo Yitzhaki} {and} \bibinfo{person}{Edna
  Schechtman}.} \bibinfo{year}{2013}\natexlab{}.
\newblock \showarticletitle{More than a dozen alternative ways of spelling
  Gini}.
\newblock In \bibinfo{booktitle}{\emph{The Gini Methodology}}.
  \bibinfo{publisher}{Springer}, \bibinfo{pages}{11--31}.
\newblock


\bibitem[Yosida et~al\mbox{.}(1965)]%
        {yosida1965functional}
\bibfield{author}{\bibinfo{person}{K{\^o}saku Yosida} {et~al\mbox{.}}}
  \bibinfo{year}{1965}\natexlab{}.
\newblock \showarticletitle{Functional analysis}.
\newblock  (\bibinfo{year}{1965}).
\newblock


\bibitem[Yurtsever et~al\mbox{.}(2018)]%
        {yurtsever2018conditional}
\bibfield{author}{\bibinfo{person}{Alp Yurtsever}, \bibinfo{person}{Olivier
  Fercoq}, \bibinfo{person}{Francesco Locatello}, {and} \bibinfo{person}{Volkan
  Cevher}.} \bibinfo{year}{2018}\natexlab{}.
\newblock \showarticletitle{A conditional gradient framework for composite
  convex minimization with applications to semidefinite programming}. In
  \bibinfo{booktitle}{\emph{International Conference on Machine Learning}}.
  PMLR, \bibinfo{pages}{5727--5736}.
\newblock


\bibitem[Zehlike and Castillo(2020)]%
        {zehlike2020reducing}
\bibfield{author}{\bibinfo{person}{Meike Zehlike} {and} \bibinfo{person}{Carlos
  Castillo}.} \bibinfo{year}{2020}\natexlab{}.
\newblock \showarticletitle{Reducing disparate exposure in ranking: A learning
  to rank approach}. In \bibinfo{booktitle}{\emph{Proceedings of The Web
  Conference 2020}}. \bibinfo{pages}{2849--2855}.
\newblock


\bibitem[Zimmer et~al\mbox{.}(2021)]%
        {zimmer2021learning}
\bibfield{author}{\bibinfo{person}{Matthieu Zimmer}, \bibinfo{person}{Claire
  Glanois}, \bibinfo{person}{Umer Siddique}, {and} \bibinfo{person}{Paul
  Weng}.} \bibinfo{year}{2021}\natexlab{}.
\newblock \showarticletitle{Learning fair policies in decentralized cooperative
  multi-agent reinforcement learning}. In
  \bibinfo{booktitle}{\emph{International Conference on Machine Learning}}.
  PMLR, \bibinfo{pages}{12967--12978}.
\newblock


\end{thebibliography}



\end{document}